\newcommand{\lmerge}{\mathbin{
                  \setlength{\unitlength}{1ex}
                  \begin{picture}(1,1.75)
                  \put(0,0){\line(1,0){1}}
                  \put(0,0){\line(0,1){1.75}}
                  \put(0.45,0){\line(0,1){1.75}}
                  \end{picture}
                 }}
\newcommand{\cmerge}{~|~}
\newcommand{\hmerge}{\mathrel{|^{\negmedspace\scriptstyle /}}}
\newcommand{\textl}{~\!\lmerge}
\newcommand{\textc}{|}
\newcommand{\nil}{\mathbf{0}}
\newcommand{\Act}{\mathcal{A}}
\newcommand{\Acttau}{\mathcal{A}_\tau}
\newcommand{\Var}{\mathcal{V}}
\newcommand{\size}{\mathrm{size}}
\newcommand{\init}{\mathrm{init}}
\newcommand{\trans}[1][]{\xrightarrow{\, {#1} \, }}
\newcommand{\ntrans}[1][]{\mathrel{{\trans[#1]}\makebox[0em][r]{$\not$\hspace{2ex}}}{\!}}
\newcommand{\SOSrule}[2]{\frac{\displaystyle #1}{\displaystyle #2}}
\newcommand{\rel}{\,{\mathcal R}\,}
\newcommand{\Proc}{\mathbf{P}}
\newcommand{\FCCS}{\text{CCS}_{f}}
\newcommand{\E}{\mathcal{E}}
\newcommand{\tr}{\mathtt{T}}
\newcommand{\ctr}{\mathtt{CT}}
\newcommand{\fail}{\mathtt{F}}
\newcommand{\ready}{\mathtt{R}}
\newcommand{\ftr}{\mathtt{FT}}
\newcommand{\rtr}{\mathtt{RT}}
\newcommand{\pf}{\mathtt{PF}}
\newcommand{\s}{\mathtt{S}}
\newcommand{\cs}{\mathtt{CS}}
\newcommand{\rs}{\mathtt{RS}}
\newcommand{\B}{\mathtt{B}}
\newcommand{\wb}{\mathtt{WB}}
\newcommand{\rwb}{\mathtt{RWB}}
\let\doendexample\endexample
\renewcommand\endexample{~\hfill\scalebox{0.85}{$\LHD$}\doendexample} 
\newtheorem{asz}{Simplifying Assumption}
\newenvironment{assumption}{\begin{asz} \rm  }{\end{asz}}
\begin{document}

\title{In search of lost time: Axiomatising parallel composition in process algebras}

\author{
Luca Aceto\inst{1}\inst{2}
\and
Elli Anastasiadi\inst{2}
\and
Valentina Castiglioni\inst{2}
\and
Anna Ing{\'o}lfsd{\'o}ttir\inst{2}
\and
Bas Luttik\inst{3}
}

\institute{
Gran Sasso Science Institute, L'Aquila, Italy
\and
Reykjavik University, Reykjavik, Iceland 
\and
Eindhoven University of Technology, Eindhoven, The Netherlands
}

\authorrunning{L. Aceto et al.}

\maketitle

\begin{abstract}
This survey reviews some of the most recent achievements in the saga of the axiomatisation of parallel composition, along with some classic results. 
We focus on the recursion, relabelling and restriction free fragment of CCS and we discuss the solutions to three problems
that were open for many years. 
The first problem concerns the status of Bergstra and Klop’s auxiliary operators left merge and communication merge in the finite axiomatisation of parallel composition modulo bisimiliarity: 
We argue that, under some natural assumptions, the addition of a single auxiliary binary operator to CCS does not yield a finite axiomatisation of bisimilarity. 
Then we delineate the boundary between finite and non-finite axiomatisability of the congruences in van Glabbeek’s linear time-branching time spectrum over CCS. 
Finally, we present a novel result to the effect that rooted weak bisimilarity has no finite complete axiomatisation over CCS.
\end{abstract}


\section{Introduction}

\emph{Process algebras} \cite{BPS01,BBR10} are prototype specification languages for the description and analysis of concurrent and distributed systems, or simply \emph{processes}. 
These languages offer a variety of operators to specify composite processes from components one has already built. 
Notably, in order to model the concurrent interaction between processes, the majority of process algebras include some form of \emph{parallel composition} operator, also known as merge.

\emph{Equational axioms} can be used to express elegantly whether two syntactically different expressions in a language describe the same behaviour. 
Given the relevance of the parallel composition operator in concurrency theory, it is thus not surprising that the literature on process algebras offers a wealth of studies on the equational characterisation of this operator.

In this survey we review some classic results, and we briefly report on some of the most recent achievements on the axiomatisation of parallel composition.


\subsection{Motivation}

The expected semantic properties of processes can be defined, implicitly, by a set of \emph{equational axioms}, i.e., equalities between terms in the considered language.
Informally, if a term $t$ is proved equal to a term $u$ by means of the axioms, then we can say that $t$ and $u$ describe the same behaviour.

\emph{Behavioural equivalences} have been introduced as simple and elegant tools for an explicit comparison of the semantics of processes.
These are equivalence relations allowing one to establish whether two processes have the same \emph{observable behaviour}: different notions of observability correspond to different levels of abstraction from the information on process execution, which can either be considered irrelevant in an application context, or be unavailable to an external observer.
When an equivalence is compositional with respect to the operators in the considered language, it is called a \emph{congruence}.

Obtaining an axiomatisation of a behavioural congruence is then a classic, key problem in concurrency theory, as an equational axiomatisation characterises the semantics of a process algebra in a purely syntactic fashion.
Hence, this characterisation becomes independent of the details of the definition of the process semantics of interest.
At the same time, an axiomatisation also underlines the differences between the various semantics via a collection of revealing axioms, and, due to its syntactic nature, it can be applied in verification tools based on theorem proving of rewriting.

The studies on the axiomatisability of behavioural congruences find their roots in the attempt to provide an answer to the following two key questions:
\begin{itemize}
\item \emph{Is a collection of axioms \emph{complete}?}

This means that we are interested in verifying whether all the equations that hold modulo the chosen notion of behavioural congruence can be derived from the set of axioms using the rules of equational logic.

\item \emph{Does the process algebra modulo the chosen behavioural congruence afford a \emph{finite} equational axiomatisation?}

This means that we are interested in verifying whether there is a finite collection of axioms for the algebra that is sound and complete.
\end{itemize}

In this paper we focus on the parallel composition operator and present some of the most recent results on its equational characterisations modulo various behavioural congruences.
In order to put these new achievements into context, we present them alongside a brief recap of the saga of the axiomatisation of parallel composition, as without the work carried out in the last 40 years our new contributions would not have been possible.


\subsection{Standing on the shoulders of giants}

In the late 1970s, Milner developed the \emph{Calculus of Communicating Systems} (CCS) \cite{Mi80}, a formal language based on a message-passing paradigm and aimed at describing communicating processes from an operational point of view.
In detail, a \emph{labelled transition system} (LTS) \cite{Ke76} was used to equip language expressions with an \emph{operational semantics} \cite{Pl81} and was defined using a collection of syntax-driven rules.
The analysis of process behaviour was carried out via an observational \emph{bisimulation}-based theory \cite{Pa81} that defines when two states in an LTS describe the same behaviour.
In particular, CCS included a \emph{parallel composition operator} $\|$ to model the interactions among processes.
Such an operator, also known as \emph{merge} \cite{BK84b,BK85}, allows one both to \emph{interleave} the behaviours of its argument processes (modelling concurrent computations) and to enable some form of \emph{synchronisation} between them (modelling interactions).
Later on, in collaboration with Hennessy, Milner studied the \emph{equational  theory} of (recursion free) CCS and proposed a \emph{ground-complete axiomatisation} for it modulo bisimilarity \cite{HM80,HM85}. 
More precisely, Hennessy and Milner presented a set $\E$ of \emph{equational axioms} from which all equations over closed CCS terms (namely those with no occurrences of variables) that are \emph{valid modulo bisimilarity} can be derived using the rules of \emph{equational logic} \cite{T77}.
Notably, the set $\E$ included infinitely many axioms, which were instances of the \emph{expansion law} that was used to `simulate equationally' the operational semantics of the parallel composition operator.

The ground-completeness result by Hennessy and Milner started the quest for a finite axiomatisation of CCS's parallel composition operator modulo bisimilarity.

Bergstra and Klop showed in \cite{BK84b} that a finite ground-complete axiomatisation modulo bisimilarity can be obtained by enriching CCS with two auxiliary operators, namely the \emph{left merge} $\textl$ and the \emph{communication merge} $\textc$, expressing the one step in the asymmetric pure interleaving and the synchronous behaviour of $\|$, respectively.
Their result was then strengthened by Aceto et al.\ in \cite{AFIL09}, where it is proved that, over the fragment of CCS without recursion, restriction and relabelling, the auxiliary operators $\textl$ and $\textc$ allow for finitely axiomatising $\|$ modulo bisimilarity also when CCS terms with variables are considered.
Moreover, in \cite{AILT08} that result is extended to the fragment of CCS with relabelling and restriction, but without communication.
From those studies, we can infer that the left merge and communication merge operators are \emph{sufficient} to finitely axiomatise parallel composition modulo bisimilarity. 
But is the addition of auxiliary operators \emph{necessary} to obtain a finite equational axiomatisation, or can the use of the expansion law in the original axiomatisation of bisimilarity be replaced by a finite set of sound CCS equations?

To address that question, in \cite{Mo89,Mo90} Moller considered a minimal fragment of CCS, including only action prefixing, nondeterministic choice and interleaving, and proved that, even in the presence of a single action, bisimilarity does not afford a finite ground-complete axiomatisation over the closed terms in that language. This showed that auxiliary operators are indeed necessary to obtain a finite equational axiomatisation of bisimilarity.
Adapting Moller's proof technique, Aceto et al.\ proved, in \cite{AFIL05}, that if we replace $\textl$ and $\textc$ with the so called \emph{Hennessy's merge} $\hmerge$ \cite{He88}, which denotes an asymmetric interleaving with communication, then the collection of equations that hold modulo bisimilarity over the recursion, restriction and relabelling free fragment of CCS enriched with $\hmerge$ is not finitely based.
(\emph{Henceforth, whenever we refer to CCS, we mean the recursion, restriction and relabelling free fragment of CCS.})
A natural question that arises from those negative results is the following:
\begin{equation}
\tag{Q1}\label{eq:no_binary_intro}
\parbox{\dimexpr\linewidth-4em}{
\strut
\emph{Can we obtain a finite axiomatisation of parallel composition in bisimulation semantics by adding} 
only one binary operator
\emph{to the signature of CCS?}
\strut
}
\end{equation}
Recently, in \cite{ACFIL21}, we provided a \emph{partial} answer to that question: under three reasonable assumptions, a binary auxiliary operator alone does not allow us to obtain a finite, ground-complete axiomatisation modulo bisimilarity.
The results of \cite{ACFIL21} are discussed in Section~\ref{sec:no_binary}.

So far, we have considered equational characterisations of parallel composition modulo bisimilarity.
In \cite{vG90}, van Glabbeek presented the \emph{linear time-branching time spectrum}, i.e., a taxonomy of behavioural equivalences based on their distinguishing power.
He carried out his study in the setting of the process algebra BCCSP, which consists of the basic operators from CCS and CSP \cite{Ho85}, and he proposed \emph{ground-complete axiomatisations} for most of the congruences in the spectrum over this language.
Those axiomatisations are \emph{finite} if so is the set of actions.
For the ready simulation, ready trace and failure trace equivalences, the axiomatisation in \cite{vG90} made use of conditional equations;
Blom, Fokkink and Nain gave purely equational, finite axiomatisations in \cite{BFN03}.
Then, the works in \cite{AFvGI04}, on nested semantics, and in \cite{CFvG15}, on impossible futures semantics, completed the studies of the axiomatisability of behavioural congruences over BCCSP by providing \emph{negative} results: neither impossible futures nor any of the nested semantics have a finite, ground-complete axiomatisation over BCCSP. 
Since the algebra BCCSP does not include any operator for the parallel composition of processes, it is natural to ask:
\begin{equation}
\tag{Q2}\label{eq:spectrum_intro}
\parbox{\dimexpr\linewidth-4em}{
\strut
\emph{Can we obtain a finite axiomatisation of parallel composition, over CCS,} modulo congruences other \emph{than bisimilarity?}
\strut
}
\end{equation}
In Section~\ref{sec:spectrum} we report our results from \cite{ACILP20}, where we delineated the \emph{boundary} between finite and non-finite axiomatisability of the parallel composition operator modulo all the congruences in the spectrum.

As briefly outlined above, sometimes the information carried by the LTS can either be considered irrelevant or be unavailable to an external observer.
\emph{Weak behavioural semantics} have been introduced to study the effects of these unobservable (or \emph{silent}) actions on the observable behaviour of processes, each semantics considering a different level of abstraction.
A taxonomy of weak semantics is given in \cite{vG93}, and studies on the equational theories of various of these semantics have been carried out over the algebra BCCSP (see, among others, \cite{AdFGI14,CFvG08,dNH83,vGW96,HM85}).
A finite, ground-complete axiomatisation of parallel composition modulo \emph{rooted weak bisimilarity} (also known as \emph{observational congruence}) is provided by Bergstra and Klop in \cite{BK85} over the algebra ACP$\!\!\phantom{.}_\tau$ that, however, includes the auxiliary operators $\textl$ and $\textc$.
To the best of our knowledge, it is unknown whether the use of auxiliary operators is necessary to guarantee the finiteness of such an axiomatisation. 
Hence, the following question naturally arises: 
\begin{equation}
\tag{Q3}\label{eq:weak_intro}
\parbox{\dimexpr\linewidth-4em}{
\strut
\emph{Can we obtain a finite axiomatisation of parallel composition} modulo rooted weak bisimilarity \emph{over CCS}?
\strut
}
\end{equation}
In Section~\ref{sec:weak} we provide a negative answer to this question, by exploiting the results by Moller on strong semantics.
Actually, we obtain a stronger result: our proof technique applies to all congruences that coincide with strong bisimilarity on processes without silent moves, impose the root condition on initial silent moves, and satisfy a particular family of equations introduced by Moller in \cite{Mo90a}.
These include \emph{rooted branching bisimilarity}, \emph{rooted delay bisimilarity}, and \emph{rooted $\eta$-bisimilarity}.
We remark that this is an original contribution enriching our survey.


\subsection{Organisation of contents}
We start by reviewing some preliminary notions on process
algebras and equational logic in Section~\ref{sec:back}. 
We discuss some technical details in the proofs of classic results from in \cite{HM85,BK84b,Mo89} in Section~\ref{sec:classic}, since they are useful to appreciate our novel contributions. 
Then, our answers to questions (\ref{eq:no_binary_intro}), (\ref{eq:spectrum_intro}) and (\ref{eq:weak_intro}) are presented in Section~\ref{sec:no_binary}, Section~\ref{sec:spectrum}, and Section~\ref{sec:weak}, respectively.
We conclude our survey by discussing related and future work in Section~\ref{sec:conclusion}.


\section{Background}
\label{sec:back}

\subsection*{Labelled Transition Systems}
\label{Sect:lts+bis}

As semantic model we consider classic \emph{labelled transition systems} \cite{Ke76}.
We assume a set of action names $\Act$, and we let $\overline{\Act}$ denote the set of action co-names, i.e., $\overline{\Act}=\{\overline{a} \mid a \in \Act\}$.
As usual, we postulate that $\overline{\overline{a}}=a$ and $a \neq \overline{a}$ for all $a \in \Act$.
Then, we let $\Acttau = \Act \cup \overline{\Act} \cup \{\tau\}$.
Henceforth, we let $\mu,\nu,\dots$ range over actions in $\Acttau$, and $\alpha,\beta,\dots$ range over actions in $\Act \cup \overline{\Act}$.

\begin{definition}
[Labelled Transition System]
\label{Def:lts}
A {\sl labelled transition system} (LTS) is a triple $(\Proc,\Acttau,\trans[])$, where $\Proc$ is a set of \emph{processes} (or \emph{states}), $\Acttau$ is a set of {\sl actions}, and ${\trans[]} \subseteq \Proc \times \Acttau \times \Proc$ is a ({\sl labelled}) {\sl transition relation}. 
\end{definition}

As usual, we use $p \trans[\mu] p'$ in lieu of $(p,\mu,p') \in {\trans[]}$. 
For each $p \in \Proc$ and $\mu \in \Act$, we write $p \trans[\mu]$ if $p \trans[\mu] p'$ holds for some $p'$, and $p \ntrans[\mu]$ otherwise. 
The \emph{initials} of $p$ are the actions that label the outgoing transitions of $p$, that is, $\init(p) = \{\mu \mid p \trans[\mu] \}$. 
For a sequence of actions $\varphi = \mu_1 \cdots \mu_k$ ($k \geq 0$), and processes $p,p'$, we write $p \trans[\varphi] p'$ if and only if there exists a sequence of transitions $p = p_0 \trans[\mu_1] p_1 \trans[\mu_2] \cdots \trans[\mu_k] p_k = p'$. 
If $p \trans[\varphi] p'$ holds for some process $p'$, then $\varphi$ is a {\em trace} of $p$.


\subsection*{The Language CCS}
\label{sec:CCS}

The language we consider in this paper is the recursion, restriction and relabelling free fragment of Milner's CCS~\cite{Mi89}, given by the following grammar:
\[
t ::=\; \nil \;|\; 
x \;|\; 
\mu.t \;|\; 
t+t \;|\; 
t \mathbin{\|} t
\enspace ,
\]
where $x$ is a variable drawn from a countably infinite set $\Var$, and $\mu \in \Acttau$.
Following~\cite{Mi89}, the action symbol $\tau$ will result from the synchronised occurrence of two complementary actions, like $a$ and $\overline{a}$. 
The SOS rules \cite{Pl81} for the CCS operators given above are reported in Table~\ref{tab:sos_rules}.

We shall use the meta-variables $t,u,v,w$ to range over process terms.
The {\sl size} of a term $t$, denoted by $\size(t)$, is the number of operator symbols in $t$. 
A term is {\em closed} if it does not contain any variables.  
Closed terms, or {\sl processes}, will be denoted by $p,q,r$. 
Moreover, we omit trailing $\nil$'s from terms.

\begin{table}[t]
\begin{gather*}
\SOSrule{}{\mu.x \trans[\mu] x} 
\qquad
\SOSrule{x \trans[\mu] x'}{x + y \trans[\mu] x'} 
\qquad 
\SOSrule{y \trans[\mu] y'}{x + y \trans[\mu] y'} 
\\
\SOSrule{x \trans[\mu] x'}{x \mathbin{\|} y \trans[\mu] x' \mathbin{\|} y} 
\qquad
\SOSrule{y \trans[\mu] y'}{x \mathbin{\|} y \trans[\mu] x \mathbin{\|} y'} 
\qquad
\SOSrule{x \trans[\alpha] x' \quad y \trans[\overline{\alpha}] y'}{x \mathbin{\|} y \trans[\tau] x' \mathbin{\|} y'} 
\end{gather*}
\caption{\label{tab:sos_rules} The SOS rules for CCS operators ($\mu \in \Acttau$, $\alpha \in \Act \cup \overline{\Act}$).} 
\end{table}

A {\sl (closed) substitution} is a mapping from process variables to (closed) CCS terms.
Substitutions are extended from variables to terms, transitions, and rules in the usual way.
Note that $\sigma(t)$ is closed, if so is $\sigma$. 
We let $\sigma[x\mapsto p]$ denote the substitution that maps the variable $x$ into process $p$ and behaves like $\sigma$ on all other variables.

The inference rules in Table~\ref{tab:sos_rules} allow us to derive valid transitions between closed terms.
The operational semantics for our language (and for all its extensions we shall introduce in the remainder of the paper) is then modelled by the LTS whose processes are the closed terms, and whose labelled transitions are those that are provable from the SOS rules.

In the remainder of this paper, we exploit the associativity and commutativity of $+$ modulo bisimilarity and we consider process terms modulo them, namely we do not distinguish $t+u$ and $u+t$, nor $(t+u)+v$ and $t+(u+v)$.  
In what follows, the symbol $=$ will denote equality modulo the above identifications. 
We use a {\em summation} $\sum_{i\in\{1,\ldots,k\}}t_i$ to denote the term $t= t_1+\cdots+t_k$, where the empty sum represents $\nil$.
We can also assume that the terms $t_i$, for $i \in \{1,\dots,k\}$, do not have $+$ as head operator, and refer to them as the \emph{summands} of $t$.

Henceforth, for each action $\mu$ and $m \ge 0$, we let $\mu^0$ denote $\nil$ and $\mu^{m+1}$ denote $\mu(\mu^m)$.  


\subsection*{Behavioural equivalences}

\emph{Behavioural equivalences} have been introduced as a tool to establish whether the behaviours of two processes are \emph{indistinguishable for their observers}.
Roughly, they allow us to check whether the \emph{observable} semantics of two processes is \emph{the same}.
In the literature we can find several notions of behavioural equivalence based on the observations that an external observer can make on the process.
In his seminal article \cite{vG90}, van Glabbeek gave a taxonomy of the behavioural equivalences discussed in the literature on concurrency theory, which is now called the \emph{linear time-branching time spectrum} (see Figure~\ref{fig:spectrum}).
 
One of the main concerns in the development of a meta-theory of process languages is to guarantee their \emph{compositionality}, i.e., that the \emph{replacement} of a component of a system with a $\sim$-equivalent one, for a chosen behavioural equivalence $\sim$, does not affect the behaviour of that system.
In algebraic terms, this is known as the \emph{congruence property} of $\sim$ with respect to all language operators, and consists in verifying whether, for all $n$-ary operators $f$
\[
t_i \sim t_i', \;\forall\, i = 1,\dots,n \Longrightarrow f(t_1,\dots,t_n) \sim f(t_1',\dots,t_n').
\]
Since all the operators considered in this paper are defined by inference rules in the de Simone format \cite{dS85}, by \cite[Theorem 4]{vG93b} we have that all the equivalences in the spectrum in Figure~\ref{fig:spectrum} are congruences with respect to them.

In particular, we shall consider the states in a LTS modulo the equivalence relation known as bisimilarity~\cite{Mi89,Pa81}.

\begin{definition}
[Bisimilarity]
\label{Def:bisimulation}
Let $(\Proc,\Act,\trans[])$ be a LTS. 
{\sl Bisimilarity}, denoted by $\sim_\B$, is the largest binary symmetric relation over $\Proc$ such that whenever $p \sim_\B q$ and $p \trans[\mu] p'$, then there is a transition $q \trans[\mu] q'$ with $p' \sim_\B q'$.
If $p \sim_\B q$, then we say that $p$ and $q$ are {\sl bisimilar}. 
\end{definition}


\subsection*{Equational Logic}
\label{Sect:logic}

An \emph{axiom system} $\E$ is a collection of (\emph{process}) \emph{equations} $t \approx u$ over CCS.
An equation $t \approx u$ is \emph{derivable} from an axiom system $\E$, notation $\E \vdash t \approx u$, if there is an \emph{equational proof} for it from $\E$, namely if $t \approx u$ can be inferred from the axioms in $\E$ using the \emph{rules} of \emph{equational logic}.
The rules over CCS are reported in Table~\ref{tab:equational_logic}. 
\begin{table}[t]
\begin{gather*}
\scalebox{0.9}{($e_1$)}\; t \approx t 
\qquad
\scalebox{0.9}{($e_2$)}\; \frac{t \approx u}{u \approx t} 
\qquad
\scalebox{0.9}{($e_3$)}\; \frac{{t \approx u ~~ u \approx v}}{{t \approx v}} 
\qquad
\scalebox{0.9}{($e_4$)}\; \frac{{t \approx u}}{{\sigma(t) \approx \sigma(u)}} 
\\[.2cm]
\scalebox{0.9}{($e_5$)}\; \frac{t \approx u}{\mu. t \approx \mu. u}
\qquad
\scalebox{0.9}{($e_6$)}\; \frac{t \approx  u~~ t' \approx u'}{t+t' \approx u+u'}
\qquad
\scalebox{0.9}{($e_7$)}\; \frac{t \approx  u~~ t' \approx u'}{t\mathbin{\|} t' \approx u\mathbin{\|} u'}
\enspace .
\end{gather*}
\caption{\label{tab:equational_logic} The rules of equational logic} 
\end{table}
Without loss of generality one may assume that substitutions happen first in equational proofs, i.e., that rule ($e_4$) may only be used when $(t \approx u) \in \E$.  
In this case $\sigma(t) \approx \sigma(u)$ is called a {\em substitution instance} of an axiom in $\E$.

We are interested in equations that are valid modulo some congruence relation $\sim$ over closed terms.
The equation $t \approx u$ is said to be \emph{sound} modulo $\sim$ if $\sigma(t) \sim \sigma(u)$ for all closed substitutions $\sigma$.
For simplicity, if $t \approx u$ is sound, then we write $t \sim u$.
An axiom system is \emph{sound} modulo $\sim$ if, and only if, all of its equations are sound modulo $\sim$. 
Conversely, we say that $\E$ is \emph{complete} modulo $\sim$ if $t \sim u$ implies $\E \vdash t \approx u$ for all terms $t,u$.
If we restrict ourselves to consider only equations in which there are no occurrences of variables, then $\E$ is said to be \emph{ground-complete} modulo $\sim$.
We say that $\sim$ has a \emph{finite}, (ground) complete, axiomatisation, if there is a \emph{finite} axiom system $\E$ that is sound and (ground) complete for $\sim$.


\section{Classic results}
\label{sec:classic}

In this section we give a bird's eye view on the classic results on the axiomatisability of the parallel composition operator that we surveyed in the Introduction, in order to underline some technical features that will be useful in the remainder of the paper. 

\begin{table}[t]
\centering
\begin{tabular}{ll}
\, \scalebox{0.95}{(A0)} \; $x + \nil \approx x$ & 
\qquad \scalebox{0.95}{(P0)} \; $x \mathbin{\|} \nil \approx x$ \\[.1cm]
\, \scalebox{0.95}{(A1)} \; $x + y \approx y + x$ & 
\qquad \scalebox{0.95}{(P1)} \; $x \mathbin{\|} y \approx y \mathbin{\|} x$ \\[.1cm]
\, \scalebox{0.95}{(A2)} \; $(x + y) + z \approx x + (y+z)$ \\[.1cm]
\, \scalebox{0.95}{(A3)} \; $x + x \approx x$ \\[.2cm]
\end{tabular}
\caption{Basic axioms for CCS.
We define $\E_0 = \{\mathrm{A0,A1,A2,A3}\}$ and $\E_1 = \E_0 \cup \{\mathrm{P0,P1}\}$.}
\label{tab:basic_axioms}
\end{table}

The first study on the equational characterisation of parllel composition was carried out by Hennessy and Milner, in their seminal paper \cite{HM80} (preliminary version of \cite{HM85}).
There, they provided a ground-complete axiomatisation of CCS modulo bisimilarity.
This axiomatisation consisted of the axiom system $\E_0 = \{\mathrm{A0,A1,A2,A3}\}$ given in Table~\ref{tab:basic_axioms}, which is a ground-complete axiomatisation of BCCSP modulo bisimilarity (a proof can be found, e.g., in \cite{vG90}), enriched with the axiom schema EL in Table~\ref{tab:exp_law}, known as \emph{the expansion law}.

\begin{table*}[t]
\[
\begin{array}{ll}
\, \scalebox{0.95}{(EL)} & \displaystyle \sum_{i \in I} \mu_ix_i \mathbin{\|} \sum_{j \in J} \nu_j y_j \approx \sum_{i \in I} \mu_i (x_i \mathbin{\|} \sum_{j \in J} \nu_j y_j) + 
\sum_{j \in J} \nu_j (\sum_{i \in I} \mu_i x_i \mathbin{\|} y_j) + \sum_{i \in I, j \in J \atop \mu_i = \overline{\nu_j}} \tau(x_i \mathbin{\|} y_j) \\[.2cm]
\, \scalebox{0.95}{(EL1)} & \mu x \mathbin{\|} \nu y \approx \mu(x \mathbin{\|} \nu y) + \nu(\mu x \mathbin{\|} y)
\phantom{\;\, + \tau(x \mathbin{\|} y)}
\qquad \text{ if } \mu \neq \overline{\nu}, 
\text{ or } \mu = \tau, 
\text{ or } \nu = \tau \\[.1cm]
\, \scalebox{0.95}{(EL1$\tau$)} & \mu x \mathbin{\|} \nu y \approx \mu(x \mathbin{\|} \nu y) + \nu(\mu x \mathbin{\|} y) + \tau(x \mathbin{\|} y)
\qquad \text{ if } \mu = \overline{\nu} \\[.2cm]
\, \scalebox{0.95}{(EL2)} & \displaystyle \sum_{i \in I} \mu_ix_i \mathbin{\|} \sum_{j \in J} \nu_j y_j \approx \sum_{i \in I} \mu_i (x_i \mathbin{\|} \sum_{j \in J} \nu_j y_j) + 
\sum_{j \in J} \nu_j (\sum_{i \in I} \mu_i x_i \mathbin{\|} y_j) +
\sum_{i \in I, j \in J \atop \mu_i = \overline{\nu_j}} \tau(x_i \mathbin{\|} y_j) \\
& \text{ with } \mu_i \neq \mu_k \text{ if } i \neq k \text{ and } \nu_j \neq \nu_h \text{ if } j \neq h, \forall\, i,k \in I, \forall\, j,h \in J \\[.2cm]
\end{array}
\]
\caption{\label{tab:exp_law} The different instantiations of the expansion law.}
\end{table*}

The expansion law was used to deal with parallel composition: it states that whenever the initial behaviour of the two parallel components is known, then the initial behaviour of their composition can be described explicitly by the term on the right-hand side of equation EL.
Informally, as parallel composition does not distribute over choice in either of its arguments, modulo bisimilarity, the only way to describe equationally the initial behaviour of a term of the form $p \mathbin{\|} q$  is first to express $p$ as $\sum_{i \in I} p_i$ and $q$ as $\sum_{j \in J} q_j$. 
One can then apply the expansion law, from left to right, to eliminate all occurrences of parallel composition from CCS processes, reducing them to BCCSP processes, and then use the ground-completeness of $\E_0$ over BCCSP to conclude that $\E_0 \cup \{EL\}$ is ground-complete over CCS modulo bisimilarity.

\begin{theorem}
[Hennessy and Milner \cite{HM80,HM85}]
The axiom system $\E_0 \cup \{EL\}$, including all possible instances of $EL$, is a ground-complete axiomatisation of CCS modulo bisimilarity.
\end{theorem}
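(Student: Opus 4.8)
The plan is to prove soundness and ground-completeness of $\E_0 \cup \{EL\}$ separately, with ground-completeness being the substantive half. Soundness is routine: one checks that every instance of $EL$ relates bisimilar closed terms, which amounts to verifying that the transitions derivable from $\sum_i \mu_i p_i \mathbin{\|} \sum_j \nu_j q_j$ via the SOS rules in Table~\ref{tab:sos_rules} are matched, up to bisimilarity, by those of the right-hand summation, and vice versa. This is a direct case analysis on the three parallel-composition rules, and soundness of $\E_0$ over BCCSP (hence over CCS) is already known from \cite{vG90}. Since $\sim_\B$ is a congruence with respect to all CCS operators, derivability from a sound axiom system yields only valid equations, so $\E_0 \cup \{EL\}$ is sound.

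For ground-completeness the key idea is \emph{elimination}: every closed CCS term can be proved equal, using $\E_0 \cup \{EL\}$, to a closed BCCSP term (one with no occurrence of $\mathbin{\|}$). First I would define a notion of $\|$-depth or simply induct on $\size(p)$, showing that for every closed $p$ there is a $\|$-free closed term $p'$ with $\E_0 \cup \{EL\} \vdash p \approx p'$. The inductive step handles a term $p \mathbin{\|} q$: by the induction hypothesis, rewrite $p$ and $q$ to $\|$-free terms, put each into the normalised summation form $\sum_{i \in I} \mu_i p_i$ and $\sum_{j \in J} \nu_j q_j$ (using A0--A3, and noting that closed $\|$-free terms are sums of prefixed terms or $\nil$), apply the matching instance of $EL$ from left to right to obtain a term whose top-level structure is a choice of prefixed subterms $\mu_i(p_i \mathbin{\|} \cdots)$, $\nu_j(\cdots \mathbin{\|} q_j)$ and $\tau(p_i \mathbin{\|} q_j)$, and then invoke the induction hypothesis again on each of the strictly smaller parallel subterms $p_i \mathbin{\|} (\cdots)$, etc., using rules ($e_5$), ($e_6$) to propagate the equalities under prefixing and choice. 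A suitable termination measure — for instance, the multiset of sizes of maximal subterms headed by $\mathbin{\|}$, or a lexicographic combination of the number of $\|$ symbols and the size — ensures this process halts.

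Once elimination is established, ground-completeness follows quickly. Suppose $p \sim_\B q$ for closed CCS terms $p,q$. By elimination there are $\|$-free closed terms $p', q'$ with $\E_0 \cup \{EL\} \vdash p \approx p'$ and $\E_0 \cup \{EL\} \vdash q \approx q'$; by soundness $p' \sim_\B p \sim_\B q \sim_\B q'$, so $p' \sim_\B q'$. Since $p',q'$ are BCCSP processes and $\E_0$ is ground-complete for bisimilarity over BCCSP, we get $\E_0 \vdash p' \approx q'$, hence $\E_0 \cup \{EL\} \vdash p' \approx q'$, and by transitivity ($e_3$) and symmetry ($e_2$) of equational logic, $\E_0 \cup \{EL\} \vdash p \approx q$.

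The main obstacle is making the elimination argument rigorous: one must choose a termination measure that genuinely decreases when $EL$ is applied and the induction hypothesis is then used on the residual parallel subterms, because naively $EL$ replaces $p \mathbin{\|} q$ by a term that still contains occurrences of $\mathbin{\|}$ (namely $x_i \mathbin{\|} \sum_j \nu_j y_j$ and friends). The point is that in each such residual the \emph{combined size} of the two parallel arguments has strictly decreased (one prefix has been stripped off), so a measure based on summing $\size$ over all maximal $\|$-headed subterms, or the number of prefix-nodes that lie above some $\|$, works; verifying this bookkeeping — together with the routine but slightly fiddly manipulation of summation forms modulo A0--A3 — is where the real work lies.
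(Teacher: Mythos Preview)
Your proposal is correct and follows essentially the same approach the paper sketches: use $EL$ left-to-right to eliminate all occurrences of $\mathbin{\|}$ from closed CCS terms, reducing them to BCCSP processes, and then invoke the known ground-completeness of $\E_0$ over BCCSP. The paper does not spell out the termination argument for elimination that you discuss, but your analysis of the measure (combined size of the parallel arguments decreases after stripping a prefix) is the standard way to make this rigorous and is entirely in line with the intended argument.
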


However, the axiomatisation proposed by Hennessy and Milner was \emph{not finite}.
In fact, the axiom schema EL generates \emph{infinitely many axioms}, even if the set of actions over which CCS terms are built is finite.

\begin{table}[t]
\begin{gather*}
\SOSrule{x \trans[\mu] x'}{x \lmerge y \trans[\mu] x' \mathbin{\|} y} 
\qquad
\SOSrule{x \trans[\alpha] x' \quad y \trans[\overline{\alpha}] y'}{x \cmerge y \trans[\tau] x' \mathbin{\|} y'} 
\end{gather*}
\caption{\label{tab:merge_rules} The SOS rules for $\textl$ and $\textc$ ($\mu \in \Acttau, \alpha \in \Act \cup \overline{\Act}$).} 
\end{table}

The first \emph{finite} ground-complete axiomatisation of parallel composition was proposed by Bergstra and Klop in \cite{BK84b}.
It was obtained by introducing two auxiliary operators to the syntax of CCS, the \emph{left merge} ($\textl$) and the \emph{communication merge} ($\cmerge$) operators, whose operational semantics is given in Table~\ref{tab:merge_rules}.
The left merge operator expresses the one step asymmetric pure interleaving behaviour of $\mathbin{\|}$, while the communication merge operator captures the synchronous behaviour of $\mathbin{\|}$.
Hence, we can use these two auxiliary operators to express equationally the behaviour of parallel composition as follows:
\begin{equation}
x \mathbin{\|} y \approx x \lmerge y + y \lmerge x + x \cmerge y
\tag{PE}\label{eq:parallel_expansion} 
\end{equation}
Intuitively, since $\textl$ distributes over choice in its first argument and $\textc$ does it in both arguments, thanks to axiom~\ref{eq:parallel_expansion} we can reduce a process of the form $\sum_{i \in I} p_i \mathbin{\|} \sum_{j \in J} q_j$ to a process of the form $\sum_{i \in I, j \in J} (p_i \mathbin{\|} q_j)$ and we can then proceed to eliminate $\mathbin{\|}$ from these simpler summands.
Hence, we no longer need to use EL to expand the initial behaviour of a parallel composition of sums, and we can thus obtain:

\begin{theorem}
[Bergstra and Klop, \cite{BK84b}]
There exists a \emph{finite}, ground-complete axiomatisation of CCS enriched with $\textl$ and $\textc$, modulo bisimilarity.
\end{theorem}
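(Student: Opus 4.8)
The plan is to exhibit an explicit \emph{finite} axiom system $\E$ and to prove it sound and ground-complete by the same route used for $\E_0\cup\{EL\}$, but with the infinitary expansion law replaced by finitely many axioms for the two auxiliary operators. Concretely I would take $\E_1$ of Table~\ref{tab:basic_axioms}, the axiom \ref{eq:parallel_expansion}, the left-merge axioms $\nil \textl x \approx \nil$, $\mu.x \textl y \approx \mu.(x \mathbin{\|} y)$, $(x+y)\textl z \approx x\textl z + y\textl z$, and the communication-merge axioms $x \cmerge y \approx y \cmerge x$, $\nil \cmerge x \approx \nil$, $(x+y)\cmerge z \approx x\cmerge z + y\cmerge z$, $\alpha.x \cmerge \overline{\alpha}.y \approx \tau.(x \mathbin{\|} y)$, and $\mu.x \cmerge \nu.y \approx \nil$ in all other cases. (These are finitely many axioms: their only schematic parameters are the actions, exactly as for the SOS rules of Table~\ref{tab:merge_rules}.) Soundness of each equation modulo $\sim_\B$ is immediate from the SOS rules of Tables~\ref{tab:sos_rules} and \ref{tab:merge_rules} together with the fact that $\sim_\B$ is a congruence, so $\E$ is sound.

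For ground-completeness I would follow the Hennessy--Milner pattern: reduce every closed term to a BCCSP term (one built only from $\nil$, prefixing and $+$) using $\E$, and then invoke the known ground-completeness of $\E_0$ over BCCSP modulo $\sim_\B$. The core is therefore an \emph{elimination lemma}: for every closed term $t$ over the extended signature there is a BCCSP term $t'$ with $\E \vdash t \approx t'$. I would prove this by induction on $\size(t)$; the cases $\nil$, $\mu.t$ and $t+u$ are handled directly by the induction hypothesis and congruence (rules $(e_5)$, $(e_6)$), while the case $t = p \mathbin{\|} q$ reduces, after applying the induction hypothesis to the strictly smaller $p$ and $q$, to the following auxiliary claim: for all BCCSP terms $p,q$, each of $p\mathbin{\|}q$, $p\textl q$ and $p\cmerge q$ is provably equal from $\E$ to a BCCSP term.

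I would establish this auxiliary claim by a nested induction on $\size(p)+\size(q)$. For $p \textl q$: if $p=\nil$, use $\nil\textl q\approx\nil$; if $p=p_1+p_2$, distribute and apply the inner induction hypothesis to the smaller summands; if $p=\mu.p'$, rewrite to $\mu.(p'\mathbin{\|}q)$ and apply the inner induction hypothesis to $p'\mathbin{\|}q$ — whose combined size is strictly smaller, the prefix $\mu$ having been stripped — then re-prefix. The case $p\cmerge q$ is analogous, splitting on the head symbols of both arguments and using commutativity and distributivity; and $p\mathbin{\|}q$ is treated by first applying \ref{eq:parallel_expansion} and then the two cases just handled, using $\E_0$ to tidy up the resulting sum. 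The step I expect to be the main obstacle is exactly the termination bookkeeping here: the axiom $\mu.x \textl y \approx \mu.(x\mathbin{\|}y)$ (and similarly the $\tau$-prefixed right-hand side of the communication-merge axiom) \emph{reintroduces} $\mathbin{\|}$, so an induction on, say, the number of auxiliary operators fails; one must notice that every such rewrite happens strictly underneath a prefix and hence strictly decreases $\size(p)+\size(q)$. Once the two-level induction (outer on $\size(t)$, inner on $\size(p)+\size(q)$) is fixed with this measure, everything else is routine.

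Finally I would assemble the pieces: given closed $p\sim_\B q$, the elimination lemma yields BCCSP terms $p',q'$ with $\E\vdash p\approx p'$ and $\E\vdash q\approx q'$; by soundness $p'\sim_\B p\sim_\B q\sim_\B q'$, hence $p'\sim_\B q'$; ground-completeness of $\E_0$ over BCCSP gives $\E_0\vdash p'\approx q'$; and since $\E_0\subseteq\E$ we conclude $\E\vdash p\approx q$. As $\E$ is finite, the theorem follows.
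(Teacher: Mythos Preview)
Your proposal is correct and follows exactly the approach the paper sketches: use axiom~\ref{eq:parallel_expansion} together with the distributivity of $\textl$ and $\textc$ to eliminate all occurrences of $\mathbin{\|}$, $\textl$, $\textc$ from closed terms, reducing to BCCSP where $\E_0$ is already known to be ground-complete. The paper gives no more detail than this (the theorem is cited as a classic result of Bergstra and Klop), and your elimination argument with the two-level induction and the size measure $\size(p)+\size(q)$ is the standard way to make that sketch precise; just remember to include the outer-induction cases $t = p \textl q$ and $t = p \cmerge q$, which you handle by the same auxiliary claim but do not list explicitly.
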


Later on, Moller proved that the use of auxiliary operators is not only sufficient to obtain a finite equational charasterisation of $\mathbin{\|}$, but it is \emph{necessary} indeed.

\begin{theorem}
[Moller \cite{Mo89,Mo90,Mo90a}]
Bisimilarity has no finite, complete axiomatisation over CCS.
\end{theorem}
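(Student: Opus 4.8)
The plan is to argue by contradiction, following the strategy introduced by Moller; in fact I would prove the formally stronger statement that $\sim_\B$ has no finite axiomatisation that is sound and \emph{ground}-complete over CCS. So suppose $\E$ is such a system. By the remark on equational proofs, we may assume that rule $(e_4)$ is applied only to axioms of $\E$. Let $n$ be a natural number strictly larger than $\size(t)$ for every term $t$ occurring in an equation of $\E$, fix two distinct actions $a,b$ with $a\neq\overline b$, and consider the closed equation
\[
e_n:\qquad a\parallel(b+a^{n})\;\approx\;a.(b+a^{n})\;+\;b.a\;+\;a^{n+1}\,.
\]
A routine inspection of the SOS rules shows that $e_n$ is sound modulo $\sim_\B$: the process $a\parallel(b+a^{n})$ can perform $a$, reaching either $b+a^{n}$ or a state bisimilar to $a^{n}$, and $b$, reaching $a$; and its right-hand side --- which contains no occurrence of $\parallel$ --- offers precisely these transitions, up to $\sim_\B$. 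Since that right-hand side is $\parallel$-free, ground-completeness of $\E$ yields $\E\vdash e_n$.

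The core of the argument is to isolate a \emph{syntactic} predicate $\mathsf P_n$ on CCS terms --- one that is \emph{not} preserved by $\sim_\B$ but \emph{is} preserved by provability from $\E$ --- such that: (i) the term $a\parallel(b+a^{n})$ satisfies $\mathsf P_n$; (ii) no $\parallel$-free term satisfies $\mathsf P_n$; and (iii) for all $t,u$, if $\E\vdash t\approx u$ then $t$ satisfies $\mathsf P_n$ iff $u$ does. From (i)--(iii) the contradiction is immediate: by (i) and (iii), $\E\vdash e_n$ forces the $\parallel$-free right-hand side of $e_n$ to satisfy $\mathsf P_n$, contradicting (ii). Intuitively $\mathsf P_n(t)$ should say that $t$ contains a subterm $r_1\parallel r_2$ in which $r_1$ is not vacuous and $r_2$ is ``deep enough'' --- say, admits a trace of length at least $n$ minus a fixed constant depending only on $\E$ --- i.e.\ a parallel component too large to have been produced by a single instance of an axiom of $\E$. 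With such a definition, (i) and (ii) are clear, and (iii) is proved by induction on the derivation of $t\approx u$; the cases of reflexivity, symmetry, transitivity and the congruence rules $(e_5)$--$(e_7)$ are routine, using that whatever semantic content the witness carries (e.g.\ admitting a long trace) is preserved by the soundness of $\E$.

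The main obstacle is the case of rule $(e_4)$, i.e.\ showing that a substitution instance $\sigma(t')\approx\sigma(u')$ of an axiom $t'\approx u'\in\E$ preserves $\mathsf P_n$. This is where the bound on $\E$ is used: since $\size(t')<n$, a witness $r_1\parallel r_2$ for $\mathsf P_n(\sigma(t'))$ cannot fit inside the pattern $t'$ --- either the whole witness lies within $\sigma(x)$ for some $x\in\var(t')$, or the $\parallel$-node of the witness comes from $t'$ but its large operand $r_2$ equals $\sigma(x)$ for some variable $x$. In both cases one must show that $\sigma(u')$ inherits a witness, and the delicate point is that $u'$ need not be syntactically similar to $t'$. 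To close this gap I would prove an auxiliary lemma on sound CCS equations: for any sound equation $t'\approx u'$ one has $\var(t')=\var(u')$, and --- crucially --- a variable occurring in $t'$ interleaved with active behaviour (inside a parallel composition one of whose operands is non-vacuous) must occur interleaved with active behaviour in $u'$ as well; otherwise the two sides would be distinguished by substituting a long enough string of a fresh action for that variable, since in CCS, unlike in calculi with a sequential-composition operator, such interleaving simply cannot be reproduced without a parallel operator. Tuning the predicate $\mathsf P_n$ so that this lemma is precisely what the $(e_4)$-case needs --- while preserving (i) and (ii) --- is the technical crux of the proof; the remaining cases are bookkeeping.
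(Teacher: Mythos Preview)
Your plan follows the right high-level template --- the proof-theoretic method --- but it diverges from Moller's argument as the paper presents it, and it leaves precisely the hard step open.

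Moller does not work with an equation of the shape $a\mathbin\|(b+a^{n})\approx\text{(a $\mathbin\|$-free term)}$, and his invariant is \emph{semantic}, not syntactic. He uses the closed instance $I_n$ of the schema $M_n$, in which \emph{both} sides still contain $\mathbin\|$, and the property that is shown to be preserved by derivations from any finite sound $\E$ is ``$p$ has a \emph{summand bisimilar to} $(a+aa)\mathbin\|\sum_{i=1}^{n}a^{i}$''. The point of using a bisimulation-invariant predicate on summands, rather than a predicate about the presence of a $\mathbin\|$-subterm, is exactly that it is robust under substitution instances of arbitrary sound axioms: one argues that if $\sigma(t')$ has such a summand and $\size(t')<n$, then the summand is essentially contributed by $\sigma(x)$ for some variable $x$ in summand position, and soundness of $t'\approx u'$ then forces a bisimilar summand to appear in $\sigma(u')$. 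No structural lemma about ``variables occurring interleaved with active behaviour'' is needed.

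Your proposal, by contrast, never fixes $\mathsf P_n$ precisely, and as written it already fails on the sound axiom $x\mathbin\| y\approx y\mathbin\| x$: with $x\mapsto a$ and $y\mapsto b+a^{n}$ the left-hand side satisfies your (asymmetric) $\mathsf P_n$ while the right-hand side does not, since its only $\mathbin\|$-subterm has $r_2=a$. Symmetrising removes that particular obstacle, but then the $(e_4)$-case rests entirely on your auxiliary lemma --- that in any sound equation a variable occurring under $\mathbin\|$ next to a non-vacuous partner must occur similarly on the other side --- which you state only informally and do not prove. You acknowledge that ``tuning $\mathsf P_n$ so that this lemma is precisely what the $(e_4)$-case needs \dots\ is the technical crux''; but that crux is the entire content of the theorem, so what remains is a strategy rather than a proof. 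A minor additional point: your family $e_n$ uses two distinct actions, whereas Moller's argument is carried out over a single action, so even if completed your route would not recover the sharper statement the paper attributes to Moller.
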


To prove this result, in \cite{Mo90a}, Moller considered the following family of equations $\{M_n\}_{n \ge 1}$:
\begin{align*}
& (x + y) \mathbin{\|} \sum_{i = 1}^n z_i +
\sum_{i = 1}^n \Big(x \mathbin{\|} z_i + y \mathbin{\|} z_i \Big) \approx \\
& x \mathbin{\|} \sum_{i = 1}^n z_i +
y \mathbin{\|} \sum_{i = 1}^n z_i +
\sum_{i = 1}^n \Big((x + y) \mathbin{\|} z_i \Big)
\tag{$M_n$}\label{eq:moller_family}
\end{align*}
and he argued that all these equations should be sound modulo any behavioural congruence that is \emph{reasonable}, including bisimilarity. 
Roughly speaking, for each $n \ge 1$, the terms in the two sides of~\ref{eq:moller_family} can match exactly their single step behaviour and, at the same time, the equation does not introduce any causal dependency between the behaviour of the single components of each term.
Moller then considered a particular family of instantiations $\{I_n\}_{n \ge 1}$ of $\{M_n\}_{n \ge 1}$, consisting only of closed terms:
\begin{align*}
& (a + aa) \mathbin{\|} \sum_{i = 1}^n a^i +
\sum_{i = 1}^n \Big(a \mathbin{\|} a^i + aa \mathbin{\|} a^i \Big) 
\approx \\
& a \mathbin{\|} \sum_{i = 1}^n a^i + 
aa \mathbin{\|} \sum_{i = 1}^n a^i +
\sum_{i = 1}^n \Big((a + aa) \mathbin{\|} a^i \Big)
\tag{$I_n$}
\end{align*}
and he argued that no finite set of equations, that are sound modulo bisimilarity, can derive $I_n$ for each $n \ge 1$.

To this end, he applied the following proof strategy, which has been later referred to as the \emph{proof-theoretic} approach to negative results.
Whenever an equation $t \approx u$ is provable from an axiom system $\E$, then there is a \emph{proof} of it, i.e., a sequence of equations $t_i \approx u_i$, for $i = 1,\dots,n$, such that $t = t_n$, $u=u_n$, and each equation $t_i \approx u_i$ is in turn derivable from $\E \cup \{t_j \approx u_j \mid j < i\}$. 
The aim in the proof-theoretic approach is to show that no such sequence exists, so that the considered equation cannot be derived from $\E$.
To this end, we need to identify a specific property of terms, say $P_n$ for $n \ge 0$, that, when $n$ is \emph{large enough}, is preserved by provability from finite, sound axiom systems.
Roughly, this means that if:
\begin{itemize}
\item $\E$ is a finite set of axioms that are sound modulo $\sim$, 
\item the equation $p \approx q$ is provable from $\E$, and 
\item $n > \size(t)$ for any term $t$ in the equations in $\E$,
\end{itemize} 
then either both $p$ and $q$ satisfy $P_n$, or none of them does. 
Then, we exhibit an infinite family of sound equations in which $P_n$ is not preserved, namely it is satisfied only by one side of each equation.

Using this method, Moller proved that whenever $n$ is larger than the size of any term occurring in the equations in a finite, sound, axiom system $\E$, then the instance $I_n$ cannot be derived from $\E$.
(In this case, the property $P_n$ was to have a summand bisimilar to $(a + aa) \mathbin{\|} \sum_{i = 1}^n a^i$.)

All negative results that will be discussed in this paper have been obtained by applying the proof-theoretic approach.
Others examples of an application can be found in, e.g., \cite{AFIL05,AACIP19,AACILP20}.
We refer the interested reader to the survey paper \cite{AFIL05a} for a presentation of other existing techniques for proving negative (as well as positive) results in equational logic.


\section{Are two binary auxiliary operators necessary?}
\label{sec:no_binary}

The aforementioned positive results by Bergstra and Klop, and Moller's negative result, raise the following question:
\begin{equation}
\tag{Q1}\label{eq:problem}
\parbox{\dimexpr\linewidth-4em}{
\strut
\emph{Can we obtain a finite axiomatisation of parallel composition in bisimulation semantics by adding} 
only one binary operator $f$
\emph{to the signature of CCS?}
\strut
}
\end{equation}
The interest in (\ref{eq:problem}) is threefold, as an answer to it would:
\begin{enumerate}
\item provide the first study on the finite axiomatisability of operators whose operational semantics is not predetermined,
\item clarify the status of the auxiliary operators \emph{left merge} and \emph{communication merge}, proposed in \cite{BK84b}, in the finite axiomatisation of parallel composition, and
\item give further insight into properties that auxiliary operators used in the finite equational characterisation of parallel composition ought to afford.
\end{enumerate}
In \cite{ACFIL21} we prove that, under some reasonable simplifying assumptions, there is no auxiliary binary operator that can be added to CCS to yield a finite equational axiomatisation of bisimilarity. 
To the best of our knowledge, \cite{ACFIL21} proposes the first non-finite axiomatisability result for a process algebra in which one of the operators, namely the auxiliary operator $f$, does not have a fixed semantics, and it is thus a substantial generalisation of previous non-finite axiomatisability theorems by Moller~\cite{Mo89,Mo90} and Aceto et al.~\cite{AFIL05}.


\subsection{The simplifying assumptions}

In \cite{ACFIL21} we analyse the axiomatisability of parallel composition over the language $\FCCS$, i.e., CCS enriched with a binary operator $f$ that we use to express $\|$ as a derived operator:
\[
t ::=\; \nil \;|\; 
x \;|\; 
a.t \;|\; 
\bar{a}.t \;|\; 
\tau. t \;|\; 
t+t \;|\; 
t \mathbin{\|} t \;|\;
f(t_1,t_2)
\enspace ,
\]
where the set of actions is $\Acttau = \{a,\overline{a},\tau\}$.

We prove that an auxiliary operator $f$ alone does not allow us to obtain a finite ground-complete axiomatisation of $\FCCS$ modulo bisimilarity.
However, for our technical developments, it has been necessary to restrict the search space for $f$ to a set of \emph{meaningful} operators, by means of the aforementioned simplifying assumptions.
Clearly, $f$ can be considered meaningful for a solution of (\ref{eq:problem}) only if it is an operator that preserves bisimilarity. 
Hence, the first assumption we make on the auxiliary operator $f$ guarantees that it meets such requirement.

One way to guarantee that $f$ preserves bisimilarity is to postulate that the behaviour of $f$ is described using Plotkin-style rules that fit a rule format that is known to preserve bisimilarity, see, e.g.,~\cite{AFV01} for a survey of such rule formats. 
The simplest format satisfying this criterion is the format proposed by de Simone \cite{dS85}. 
Hence, we make the following

\begin{assumption}
\label{Ass:deSimone}
The behaviour of $f$ is described by rules in de Simone format. 
\end{assumption}

Notice that being formally defined by rules in de Simone format is the only knowledge we assume on the operational semantics of $f$.

Our second simplifying assumption concerns how the operator $f$ can be used to axiomatise parallel composition.
We postulate that the behaviour of the parallel composition operator is expressed equationally by a law that is akin to the one used by Bergstra and Klop to define $\mathbin{\|}$ in terms of $\lmerge$ and $\cmerge$. 
Formally, we express this postulate as the following
\begin{assumption}
\label{Ass:equation}
The equation
\begin{equation}
\label{eq:intro}
x \mathbin{\|} y \approx f(x,y) + f(y,x) 
\tag{PF}
\end{equation}
is sound modulo bisimilarity.
\end{assumption}

We then proceed by a case analysis over the possible sets of de Simone rules defining the behaviour of $f$, in such a way that the validity of Equation~\eqref{eq:intro} modulo bisimilarity is guaranteed.
To fully characterise the sets of rules that may define $f$, we introduce a third simplifying assumption: 
\begin{assumption}
\label{Ass:targets}
The target of each rule for $f$ is either a variable or a term obtained by applying a single $\FCCS$ operator to the variables of the rule, according to the constraints of the de Simone format.
\end{assumption}
Then, for each of the resulting cases, we show the desired negative result using the proof-theoretic technique described in Section~\ref{sec:classic}.

To our mind, those three assumptions are `reasonable' because they allow us to simplify the combinatorial complexity of our analysis without excessively narrowing down the set of operators captured by our approach.
In particular, there are three main reasons behind the assumption on the rules for $f$ being in de Simone format:
\begin{itemize}
\item The de Simone format is the simplest congruence format for bisimilarity.
Hence we must be able to deal with this case before proceeding to any generalisation.
\item The specification of parallel composition, left merge and communication merge operators (and of the vast majority of process algebraic operators) is in de Simone format.
Hence, that format was a natural choice also for $f$.
\item The simplicity of the de Simone rules allows us to reduce considerably the complexity of our case analysis over the sets of available rules for $f$.
However, as witnessed by the developments in \cite{ACFIL21}, even with this simplification, the proof of the desired negative result requires a large amount of delicate, technical work.
\end{itemize}
The other two assumptions still allow us to obtain a significant generalisation of related works, such as \cite{AFIL05}, as they are an attempt to identify the requirements needed to apply Moller's proof technique to Hennessy's-merge-like operators.


\subsection{The results in a nutshell}

The solution of~\eqref{eq:problem} boils down to proving the following:

\begin{theorem}
\label{Thm:nonfin-en}
Suppose that our three assumptions are met.
Let $\E$ be a finite axiom system over $\FCCS$ that is sound modulo bisimilarity. 
Then there is an infinite family $\{e_n\}_{n\geq 0}$ of sound equations such that $\E$ does not prove equation $e_n$ for each $n$ larger than the size of each term in the equations in $\E$. 
\end{theorem}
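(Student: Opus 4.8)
The plan is to follow the \emph{proof-theoretic} template recalled in Section~\ref{sec:classic}. Since parallel composition is itself a CCS operator present in $\FCCS$, Moller's equations $\{M_n\}_{n\ge 1}$, and \emph{a fortiori} their closed instances in the spirit of $\{I_n\}_{n\ge 1}$, are already sound over $\FCCS$ modulo bisimilarity; a suitable adaptation of that family will serve as $\{e_n\}_{n\ge 0}$. The goal is then to produce, for each $n$ above a threshold determined by $\E$, a property $P_n$ of $\FCCS$ terms that \emph{distinguishes the two sides of} $e_n$ (it holds of exactly one of them) and that is \emph{invariant under provability} from any finite axiom system $\E$ sound modulo bisimilarity: whenever $n$ exceeds the size of every term occurring in $\E$ and a closed equation is derivable via a proof in which some term has a summand of the shape singled out by $P_n$, then both sides of that equation do. Exhibiting such a $P_n$ together with an $e_n$ whose two sides disagree on $P_n$ immediately yields $\E \nvdash e_n$.

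The first step is to determine what $f$ can possibly be. By Assumption~\ref{Ass:deSimone} every rule for $f$ has a set of premises $x_k \trans[\mu_k] x_k'$ for $k$ in some $K \subseteq \{1,2\}$ and a conclusion $f(x_1,x_2) \trans[\mu] t$, and by Assumption~\ref{Ass:targets} the target $t$ is a variable or a single $\FCCS$-operator applied to the (renamed) variables. Soundness of \eqref{eq:intro} forces $f(x,y)+f(y,x)$ to exhibit exactly the transitions of $x \mathbin{\|} y$: each move $x \trans[\mu] x'$ (resp.\ $y \trans[\mu] y'$) must be matched by a move to a state bisimilar to $x' \mathbin{\|} y$ (resp.\ $x \mathbin{\|} y'$), each synchronisation $x \trans[\alpha] x'$, $y \trans[\overline{\alpha}] y'$ must produce a $\tau$-move to a state bisimilar to $x' \mathbin{\|} y'$, and there must be no spurious transitions. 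This pins $f$ down to a short, explicit list of admissible rule sets: morally, $f$ must perform the asymmetric interleaving on at least one argument, it may or may not also interleave on the other, and it may or may not carry out communication, with the various targets built from $\mathbin{\|}$, $f$, $+$ and prefixing in tightly constrained ways. Hennessy's merge $\hmerge$ of \cite{AFIL05} is one such $f$; the theorem must handle all of them uniformly.

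With the cases in hand, the second step is, for each admissible $f$, to fix the matching invariant $P_n$ and verify that it is preserved by the rules of equational logic. In the prototypical case this is, as in \cite{Mo89,AFIL05}, ``having a summand bisimilar to a designated hard term of $e_n$'', say one of the form $p \mathbin{\|} \sum_{i=1}^{n} a^{i}$; in the remaining cases the invariant has to be adjusted to the concrete shape of the rules and targets of $f$. The delicate rules are ($e_7$) together with closure under $f$ (since $x\mathbin{\|}y$ abbreviates $f(x,y)+f(y,x)$) and ($e_4$). For ($e_7$)/$f$ one needs a decomposition lemma describing the summands of $f(p,q)$ and of $p \mathbin{\|} q$ — together with their norms and depths — in terms of those of $p$ and $q$, from which one reads off that forming an $f$- or $\mathbin{\|}$-context can neither create nor destroy a summand of the designated shape unless an operand already carried the relevant structure. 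For ($e_4$) one shows that a substitution instance of a sound axiom of $\E$, all of whose terms have size below $n$, cannot introduce or remove such a summand ``for free'' — this is precisely where the bound ``$n$ larger than the size of each term in $\E$'' is used. One then checks that the chosen $e_n$ has the designated summand on exactly one side, contradicting derivability.

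The step I expect to be the main obstacle is the case analysis itself, compounded by the fact that a rule for $f$ may have a target that again contains $f$ (or $\mathbin{\|}$): one cannot treat $f$ as a known black box, so each branch needs its own decomposition lemma and its own tailored $P_n$, and arranging all of these so that they package into the single statement of Theorem~\ref{Thm:nonfin-en}, while genuinely covering every admissible rule set, is the bulk of the intricate, technical work carried out in \cite{ACFIL21}.
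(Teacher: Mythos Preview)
Your high-level plan---classify the admissible $f$'s via the three assumptions, then run a proof-theoretic argument with an invariant of the form ``has a summand bisimilar to a designated hard term'', checking preservation under ($e_4$) and the congruence rules---is exactly the paper's. Two tactical choices in the paper differ from what you sketch, and they matter.

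First, the witness family is \emph{not} Moller's $I_n$ or any $\mathbin{\|}$-based, single-action family. The paper builds the witnesses around $f$ itself and uses complementary actions: in one branch the hard term is $f(\alpha,p_n)$ with $p_n=\sum_{i=0}^{n}\bar{\alpha}\,\alpha^{\le i}$ and $e_n$ is $\epsilon_n\colon f(\alpha,p_n)\approx \alpha p_n+\sum_{i=0}^{n}\tau\,\alpha^{\le i}$; in the other the hard term is $f(\alpha,q_n)$ with $q_n=\sum_{i=0}^{n}\alpha\,\bar{\alpha}^{\le i}$ and $e_n$ is $\varepsilon_n$. Putting $f$ (rather than $\mathbin{\|}$) in the invariant and forcing the interplay with $\bar{\alpha}$ and $\tau$ is what makes the decomposition lemmas for $f(p,q)$ tractable across all admissible rule sets; starting from the $a$-only family $I_n$ one would still have to confront the additional sound equations that $f$ may validate, and the paper avoids that detour by tailoring the witness to $f$ from the outset.

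Second, the case analysis is not organised by enumerating rule sets. After Proposition~\ref{prop:f_rules} the paper proves that no admissible $f$ distributes over $+$ in both arguments, and then splits into (i) $f$ distributes over $+$ in exactly one argument, and (ii) $f$ distributes in neither. Each branch receives its own witness family and its own invariance theorem (Theorems~\ref{thm:Labat} and~\ref{thm:LaRa}). Your rule-set enumeration would ultimately rediscover this dichotomy, because the ``hard summand'' preservation argument behaves qualitatively differently in the two regimes; identifying it up front is what keeps the case explosion under control.
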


The proof of the above result presented in \cite{ACFIL21} is built on four steps, that we proceed to sketch in the remainder of this section.

As a first step, we use the simplifying assumptions to map out the universe of \emph{meaningful} operators $f$.
In particular, by means of the following proposition we can characterise the operational semantics of each meaningful operator $f$.

\begin{proposition}
\label{prop:f_rules}
If an operator $f$ meets our three simplifying assumptions, then $f$ must have a rule of the form 
\[
\SOSrule{x \trans[\alpha] x' \quad y \trans[\overline{\alpha}] y'}{f(x,y) \trans[\tau] t_1(x',y')}
\]
for at least one $\alpha \in \{a,\overline{a}\}$, and at least one rule of the form
\[
\SOSrule{x \trans[\mu] x'}{f(x,y) \trans[\mu] t_2(x',y)}
\quad\text{ or }\quad
\SOSrule{y \trans[\mu] y'}{f(x,y) \trans[\mu] t_3(x,y')}
\]
for each $\mu\in\Acttau$, where $t_1,t_2,t_3$ are such that $t_{i}(x,y) \sim_\B x \mathbin{\|} y$ for each $i \in \{1,2,3\}$.
\end{proposition}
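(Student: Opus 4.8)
The plan is to analyse the de Simone rules that may define $f$ and to rule out, one at a time, every possibility that is incompatible with the soundness of \eqref{eq:intro} modulo bisimilarity, using Assumptions~\ref{Ass:deSimone}--\ref{Ass:targets}. First I would observe that, by Assumption~\ref{Ass:deSimone}, each rule for $f$ is a de Simone rule, so its premises test at most each of the two arguments once, with distinct fresh variables $x',y'$, and (by Assumption~\ref{Ass:targets}) its target is either one of the variables occurring in the conclusion or a single $\FCCS$ operator applied to those variables. Thus the rules for $f$ fall into a small number of shapes according to which arguments are tested and what label is produced: rules triggered by a move of $x$ alone, rules triggered by a move of $y$ alone, and rules triggered by simultaneous complementary moves of $x$ and $y$ (a synchronisation rule producing $\tau$). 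The key structural fact to extract is that, since \eqref{eq:intro} must hold for \emph{all} closed substitutions, the transitions of $f(x,y)+f(y,x)$ must exactly mirror those of $x\mathbin{\|}y$ as prescribed by Table~\ref{tab:sos_rules}.

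The heart of the argument is a transition-by-transition comparison between the two sides of \eqref{eq:intro}. For the interleaving transitions: if $p\trans[\mu]p'$, then $p\mathbin{\|}q\trans[\mu]p'\mathbin{\|}q$, so the term $f(p,q)+f(q,p)$ must also have a $\mu$-transition to a state bisimilar to $p'\mathbin{\|}q$; instantiating with suitably chosen $p,q$ (e.g.\ processes whose only moves are a single $\mu$ followed by a distinguishing continuation) forces $f$ to possess a rule $\SOSrule{x\trans[\mu]x'}{f(x,y)\trans[\mu]t_2(x',y)}$ or $\SOSrule{y\trans[\mu]y'}{f(x,y)\trans[\mu]t_3(x,y')}$ with $t_2(x',y)\sim_\B x'\mathbin{\|}y$, resp.\ $t_3(x,y')\sim_\B x\mathbin{\|}y'$. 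Running this for every $\mu\in\Acttau$ yields the second displayed conclusion of the proposition, and the constraint $t_i(x,y)\sim_\B x\mathbin{\|}y$ on the targets. For the synchronisation transitions: if $p\trans[\alpha]p'$ and $q\trans[\overline\alpha]q'$, then $p\mathbin{\|}q\trans[\tau]p'\mathbin{\|}q'$, and no interleaving rule of $f$ can produce this $\tau$-step (a de Simone interleaving rule inspects only one argument), so $f$ must carry a genuine synchronisation rule $\SOSrule{x\trans[\alpha]x'\quad y\trans[\overline\alpha]y'}{f(x,y)\trans[\tau]t_1(x',y')}$ for at least one $\alpha\in\{a,\overline a\}$, again with $t_1(x',y')\sim_\B x'\mathbin{\|}y'$; that this is needed for at least one, rather than necessarily both, choices of $\alpha$ is because the symmetric summand $f(q,p)$ can supply the $\overline\alpha/\alpha$-synchronisation.

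Conversely I would check soundness is not over-constrained: every rule that \eqref{eq:intro} forces $f$ to have must also not \emph{add} spurious transitions to $f(x,y)+f(y,x)$ beyond those of $x\mathbin{\|}y$, which is where the de Simone format and Assumption~\ref{Ass:targets} do the work of pinning the labels and the (bisimulation class of the) targets exactly. The main obstacle I anticipate is the bookkeeping in this last part: de Simone rules allow the conclusion label to differ from the premise label, and allow several distinct rules with the same trigger, so one has to argue carefully that no relabelling rule or extra copy can survive soundness of \eqref{eq:intro} — for instance, a rule $\SOSrule{x\trans[\mu]x'}{f(x,y)\trans[\nu]\cdots}$ with $\nu\ne\mu$ would let $p\mathbin{\|}q$ perform a $\nu$ it cannot perform, unless cancelled, and ruling out such cancellations across all closed instances is the delicate step. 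Once this case analysis is complete, the statement of Proposition~\ref{prop:f_rules} follows by collecting the surviving rule shapes and the target constraints.
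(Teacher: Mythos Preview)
The paper itself does not contain a proof of this proposition: it is a survey, and Proposition~\ref{prop:f_rules} is merely stated as a result imported from~\cite{ACFIL21}, with no argument given beyond the surrounding prose (``we use the simplifying assumptions to map out the universe of \emph{meaningful} operators $f$''). So there is no in-paper proof to compare your proposal against.

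That said, your proposal is the natural approach and is almost certainly the one taken in~\cite{ACFIL21}: instantiate the sound equation \eqref{eq:intro} with carefully chosen closed processes and read off, transition by transition, which de Simone rules $f$ must and must not possess. Your identification of the two directions---forcing the presence of interleaving and synchronisation rules by exhibiting transitions of $x\mathbin{\|}y$ that $f(x,y)+f(y,x)$ must match, and ruling out relabelling or spurious rules by exhibiting transitions of $f(x,y)+f(y,x)$ that $x\mathbin{\|}y$ cannot match---is exactly right, as is your observation that the symmetric summand $f(y,x)$ is why only one of the two $\alpha/\overline{\alpha}$ synchronisation rules is required. The one place your sketch is thin is the derivation of the target constraint $t_i(x,y)\sim_\B x\mathbin{\|}y$: you assert it but do not say how it is extracted. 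The argument is that once a rule shape is forced, the resulting transition of $f(p,q)$ must land in a state bisimilar to the corresponding derivative of $p\mathbin{\|}q$, and since this must hold for \emph{all} closed $p,q$ (and Assumption~\ref{Ass:targets} limits $t_i$ to a single operator applied to the rule variables), one obtains the open-term bisimilarity $t_i(x,y)\sim_\B x\mathbin{\|}y$. With that gap filled, your plan is complete.
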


Notice that when the set of actions has $2n+1$ elements, there are $3^{3n+1}$ possible operators meeting those constraints.

In the second step, we show that no meaningful operator $f$ distributes over $+$ in both arguments.
We then split the family of meaningful operators into two sub-families: 
one consisting of operators that distribute over $+$ in one argument, and the other containing those that do not distribute over $+$.
The reason behind this splitting is purely technical: although the general strategy we use to obtain the negative result for the two sub-families is the same, the are some preliminary technical results that hold for only one of the two families, thus forcing us to treat the two cases separately.
(All the details can be found in the technical report version \cite{ACFIL20} of \cite{ACFIL21}.)

Then, as third step, we prove the negative result for all meaningful operators $f$ that distribute over $+$ in one argument.

\begin{example}
Assume that the set of SOS rules for $f$ is the following: 
\begin{equation}
\label{eq:rules_ex1}
\SOSrule{x \trans[\mu] x'}{f(x,y) \trans[\mu] x' \mathbin{\|} y}\; \forall\,\mu\in\Act
\qquad
\SOSrule{x \trans[\alpha] x' \quad y \trans[\bar{\alpha}] y'}{f(x,y) \trans[\tau] x' \mathbin{\|} y'}
\end{equation}
According to the proof-theoretic approach sketched in Section~\ref{sec:classic}, we now introduce a particular family of equations on which we will build our negative result.
For each action $\mu \in \Acttau$ and a non-negative integer $i$, we let $\mu^{\scriptstyle \leq i} = \mu + \mu^2 + \cdots + \mu^i$.
Then we define
\begin{align*}
& p_n  =  \sum_{i=0}^{n} \bar{\alpha} \alpha^{\scriptstyle \leq i} & (n \ge 0) \enspace \phantom{.} \\
& \epsilon_n \colon \quad f(\alpha,p_n)  \approx  \alpha p_n + \sum_{i=0}^{n} \tau \alpha^{\scriptstyle \leq i} & (n\ge 0) \enspace,
\end{align*}
As formalised in the following theorem, having a summand bisimilar to $f(\alpha,p_n)$, for `large enough $n$', is a specific property of terms that is satisfied by all instatiations of axioms in $\E$ and it is preserved by the rules of equational logic.

\begin{theorem}
\label{thm:Labat}
Assume an operator $f$ defined by the inference rules in Equation~\eqref{eq:rules_ex1}.
Let $\E$ be a finite axiom system over $\FCCS$ that is sound modulo $\sim_\B$, 
$n$ be larger than the size of each term in the equations in $\E$, and $p,q$ be closed terms such that $p,q \sim_\B f(\alpha,p_n)$. 
If $\E \vdash p \approx q$ and $p$ has a summand bisimilar to $f(\alpha,p_n)$, then so does $q$.
\end{theorem}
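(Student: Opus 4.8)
The plan is to prove, by induction on the derivation of $\E \vdash p \approx q$ (which we may take to consist entirely of closed equations, since $p$ and $q$ are closed), the statement that if $p$ has a summand bisimilar to $f(\alpha,p_n)$ and $\depth(p) = n+2$, then so does $q$. The clause $\depth(p)=n+2$ holds whenever $p \sim_\B f(\alpha,p_n)$, so it does no harm in the situation of the theorem; its purpose is to provide an invariant that survives the congruence rules. Since $\E$ is sound, $\depth$ is bisimulation-invariant and preserved along every equational proof, so the side condition $\depth(\cdot)=n+2$ propagates for free and only the summand has to be tracked. Two facts are used throughout: $\sim_\B$ is a congruence for $f$ and for all CCS operators, as $f$ is given in de Simone format; and $f(\alpha,p_n) \sim_\B \alpha p_n + \sum_{i=0}^{n} \tau\,\alpha^{\scriptstyle \leq i}$ has $\depth$ equal to $n+2$, has initials $\{\alpha,\tau\}$ (two of them, since $\alpha \neq \tau$), and has $n+1$ pairwise non-bisimilar $\tau$-derivatives $\alpha^{\scriptstyle \leq 0},\dots,\alpha^{\scriptstyle \leq n}$.

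\textbf{Routine cases.} Reflexivity is immediate, and formulating the claim symmetrically in $p$ and $q$ settles symmetry. For transitivity through a term $r$, soundness gives $r \sim_\B p$, hence $\depth(r)=n+2$, so the induction hypothesis applies to both halves. For the congruence rule for $+$, the summand of $p = p_1+p_2$ bisimilar to $f(\alpha,p_n)$ lies in, say, $p_1$; as a summand of $p$ it has $\depth$ at most $\depth(p)=n+2$ and, being bisimilar to $f(\alpha,p_n)$, $\depth$ exactly $n+2$, so $\depth(p_1)=n+2$ and the induction hypothesis applies to the sub-derivation $\E \vdash p_1 \approx q_1$. For the congruence rules for prefixing, $\mathbin{\|}$ and $f$, the term $p$ is its own unique summand; in the prefixing case that summand has a single initial action and hence cannot be bisimilar to $f(\alpha,p_n)$, so the case is vacuous, while in the $\mathbin{\|}$ and $f$ cases soundness of the premises together with the congruence property of $\sim_\B$ yield $q \sim_\B p \sim_\B f(\alpha,p_n)$.

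\textbf{Crux: axiom instances.} The real work is the base case, where $p = \sigma(t)$ and $q = \sigma(u)$ for $(t \approx u) \in \E$ (the case $(u\approx t) \in \E$ being symmetric) and a closed substitution $\sigma$, so that $\size(t), \size(u) < n$. First I would prove a combinatorial lemma: if $\size(t) < n$ and a summand $s$ of $\sigma(t)$ is bisimilar to $f(\alpha,p_n)$, then there are a variable $x$ and a summand of $t$ of the form $x \mathbin{\|} w_1 \mathbin{\|} \cdots \mathbin{\|} w_k$ (with $k \ge 0$) such that each $\sigma(w_j) \sim_\B \nil$ and $s = \hat s \mathbin{\|} \sigma(w_1) \mathbin{\|} \cdots \mathbin{\|} \sigma(w_k)$ for some summand $\hat s$ of $\sigma(x)$; in particular $\hat s \sim_\B s \sim_\B f(\alpha,p_n)$, so $\sigma(x)$ has a summand bisimilar to $f(\alpha,p_n)$. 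The idea is that $s$ has $\depth$ $n+2$ and the ``width'' witnessed by its $n+1$ pairwise-distinct $\tau$-derivatives, and a term of size below $n$ can generate neither of these through its own operator structure: a case analysis on the head operator of the summand of $t$ yielding $s$ — using additivity of $\depth$ under $\mathbin{\|}$ and under $f$, and $\norm(f(\alpha,p_n))=1$ — forces $s$ to be inherited from a variable, up to an inert parallel context. Then, since sound equations are variable-preserving, $x$ occurs in $u$; the dual analysis on $u$, now using $\depth(\sigma(u)) = \depth(\sigma(t)) = n+2$ to exclude $x$ from occurring under a prefix or in a non-inert parallel or $f$ context of $u$, shows that $\sigma(u)$ has a summand $\hat s' \mathbin{\|} (\text{inert})$ with $\hat s'$ a summand of $\sigma(x)$ of $\depth$ $n+2$, and hence that $\sigma(u)$ has a summand bisimilar to $f(\alpha,p_n)$.

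\textbf{Main obstacle.} The hard part will be the combinatorial lemma — specifically, showing that a parallel composition $\sigma(t_1) \mathbin{\|} \sigma(t_2)$, or an application $f(\sigma(t_1),\sigma(t_2))$, whose component terms together have size below $n$, cannot be bisimilar to $f(\alpha,p_n)$ unless one component is bisimilar to $\nil$. This is where quantitative bounds on the growth of $\depth$ and of initial branching, the unique decomposition of $\mathbin{\|}$ via norms, and the combinatorics of the chains $\alpha^{\scriptstyle \leq i}$ out of which $p_n$ is built must be brought together. The remaining parts of the argument are routine, if somewhat delicate, structural induction.
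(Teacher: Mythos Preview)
The paper does not supply a proof of this theorem; it is stated inside an example and the technical work is deferred to~\cite{ACFIL20,ACFIL21}. Your overall plan---induction on the equational derivation, with the substitution step as the crux and a combinatorial lemma tracing the distinguished summand back to a variable---is the standard proof-theoretic strategy the paper describes in Section~\ref{sec:classic}, so the shape is right.

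The concrete execution has a genuine gap: the invariant you carry, ``$p$ has a summand bisimilar to $f(\alpha,p_n)$ and $\depth(p)=n{+}2$'', is \emph{not} preserved by sound axioms. Take $\E=\{\,x\approx x\mathbin{\|}\nil\,\}$ and $\sigma(x)=f(\alpha,p_n)+\overline{\alpha}$. Then $p=\sigma(x)$ has depth $n{+}2$ and has $f(\alpha,p_n)$ as a summand, while $q=\sigma(x)\mathbin{\|}\nil$ has exactly one summand, namely itself, and $q\sim_\B f(\alpha,p_n)+\overline{\alpha}\not\sim_\B f(\alpha,p_n)$ because $\overline{\alpha}\in\init(q)\setminus\init(f(\alpha,p_n))$. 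So your inductive statement already fails at a single axiom step. The error is in the ``dual analysis on $u$'': you assert that $\sigma(u)$ has a summand $\hat s'\mathbin{\|}(\text{inert})$ with $\hat s'$ a \emph{summand} of $\sigma(x)$, but $\mathbin{\|}$ does not distribute over the $+$ hidden inside $\sigma(x)$, so the only summand of $\sigma(x)\mathbin{\|}\nil$ is the whole term. The same phenomenon undermines your combinatorial lemma on the $t$-side whenever a summand of $t$ has the shape $(t'_1+t'_2)\mathbin{\|} w$.

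What is missing is the full hypothesis $p\sim_\B f(\alpha,p_n)$ at the point where you analyse the axiom instance; $\depth(p)=n{+}2$ alone does not exclude the stray $\overline{\alpha}$-summand above. One standard fix is to flatten the derivation into a chain of one-step replacements $C_i[\sigma_i(t_i)]\to C_i[\sigma_i(u_i)]$, so that soundness gives $p_i\sim_\B f(\alpha,p_n)$ at every intermediate term. Then either the distinguished summand lies outside the hole (and survives), or it equals $D[\sigma_i(t_i)]$ for some non-$+$-headed $D$ (and $D[\sigma_i(u_i)]\sim_\B D[\sigma_i(t_i)]$ is the new witness), or the hole sits under a top-level $+$ and the summand lies inside $\sigma_i(t_i)$; in that last case the summand, being bisimilar to $f(\alpha,p_n)\sim_\B p_i$, already absorbs every other summand of $p_i$, which forces $\sigma_i(t_i)\sim_\B f(\alpha,p_n)$ and restores the strong hypothesis you need for the combinatorial analysis.
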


Then, since the left-hand side of equation $\epsilon_n$, viz.~the term $f(\alpha,p_n)$, has a summand bisimilar to $f(\alpha,p_n)$, whilst the right-hand side, viz.~the term $\alpha p_n + \sum_{i=0}^{n} \tau \alpha^{\scriptstyle \le i}$, does not, we can conclude that the infinite collection of equations $\{\epsilon_n\}_{n \ge 0}$ is the desired witness family.
Theorem~\ref{Thm:nonfin-en} is then proved for the class of auxiliary binary operators defined by the inference rules given above.
\end{example}

The fourth, and final, step consists in proving the negative result for all meaningful operators $f$ that do not distribute over $+$.

\begin{example}
We choose $\alpha \in \{a,\bar{a}\}$ and we assume that the set of rules for $f$ includes 
\begin{equation}
\label{eq:rules_ex2}
\SOSrule{x \trans[\alpha] x'}{f(x,y) \trans[\alpha] x' \mathbin{\|} y}
\qquad
\SOSrule{y \trans[\alpha] y'}{f(x,y) \trans[\alpha] x \mathbin{\|} y'}
\end{equation}
We now introduce the infinite family of valid equations, modulo bisimilarity, that will allow us to obtain the negative result in the case at hand.
We define
\begin{align*}
& q_n = \sum_{i = 0}^n \alpha\bar{\alpha}^{\scriptsize \le i} & (n \ge 0) \enspace\phantom{.} \\
& \varepsilon_n \colon \quad f(\alpha,q_n) \approx \alpha q_n + \sum_{i = 0}^n \alpha ( \alpha \mathbin{\|} \bar{\alpha}^{\scriptsize \le i} ) & (n \ge 0) \enspace.
\end{align*}

Following the proof-theoretic approach, we aim to show that, when $n$ is \emph{large enough}, the term specific property of having a summand bisimilar to $f(\alpha,q_n)$ is preserved by derivations from a finite, sound axiom system $\E$, as stated in the following theorem:

\begin{theorem}
\label{thm:LaRa}
Assume an operator $f$ defined by the inference rules in Equation~\eqref{eq:rules_ex2}.
Let $\E$ be a finite axiom system over $\FCCS$ that is sound modulo $\sim_\B$, $n$ be larger than the size of each term in the equations in $\E$, and $p,q$ be closed terms such that $p,q \sim_\B f(\alpha,q_n)$.
If $\E \vdash p \approx q$ and $p$ has a summand bisimilar to $f(\alpha,q_n)$, then so does $q$.
\end{theorem}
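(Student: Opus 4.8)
The plan is to follow the proof-theoretic approach outlined in Section~\ref{sec:classic}, specialised to the operator $f$ defined by the rules in Equation~\eqref{eq:rules_ex2} and to the witness processes $q_n$ and the property $P_n$ ``having a summand bisimilar to $f(\alpha,q_n)$''. The argument proceeds by induction on the structure of a (normalised) equational proof of $p \approx q$ from $\E$. The base cases are the reflexivity axiom and substitution instances $\sigma(t) \approx \sigma(u)$ of axioms in $\E$; the inductive cases are the closure rules ($e_2$)--($e_7$) of equational logic. Throughout, I would fix $n$ strictly larger than the size of every term occurring in $\E$, and I would exploit the fact that both $p$ and $q$ are bisimilar to $f(\alpha,q_n)$, so that one has tight control over the initial transitions and the ``depth profile'' of any summand that is itself bisimilar to $f(\alpha,q_n)$.

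The first step is to understand the shape of $f(\alpha,q_n)$ operationally: from the rules in Equation~\eqref{eq:rules_ex2}, $f(\alpha, q_n) \trans[\alpha] \nil \mathbin{\|} q_n \sim_\B q_n$ and, for each $i$, $f(\alpha,q_n) \trans[\alpha] \alpha \mathbin{\|} \bar\alpha^{\le i}$ (plus whatever $\tau$- and other $\alpha$-transitions the remaining mandatory rules of Proposition~\ref{prop:f_rules} contribute, which I would pin down by the case analysis that precedes this theorem in \cite{ACFIL21}). The key structural observation — a ``decomposition lemma'' — is that a closed term bisimilar to $f(\alpha,q_n)$ cannot be written nontrivially as $r_1 \mathbin{\|} r_2$, nor as $f(r_1,r_2)$ unless $r_1,r_2$ essentially match $\alpha, q_n$; intuitively, the ``norm spread'' exhibited by the transitions $f(\alpha,q_n)\trans[\alpha]\alpha\mathbin{\|}\bar\alpha^{\le i}$ for $i=0,\dots,n$ is too large to be produced by any parallel/$f$-context built from $\E$-sized pieces. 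This is the mechanism by which $n > \size(\E)$ enters. I would then use this lemma to show that if a term $\sigma(t)$ (with $t$ a term occurring in $\E$, hence $\size(t) < n$) is bisimilar to $f(\alpha,q_n)$, then the summand of $\sigma(t)$ witnessing $P_n$ must arise from a variable $x$ of $t$ with $\sigma(x) \sim_\B f(\alpha,q_n)$ — it cannot be ``created'' by the operators of $t$ itself. Consequently $P_n$ is a property that is ``transparent to substitution'': $\sigma(t)$ has a summand bisimilar to $f(\alpha,q_n)$ iff $\sigma(u)$ does, for any axiom $t\approx u$ in $\E$, because both sides contain the same variables modulo soundness constraints. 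This handles the base case.

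For the inductive step, the cases ($e_2$) (symmetry) and ($e_3$) (transitivity) are immediate; for ($e_5$) ($\mu.p' \approx \mu.q'$) the head prefix destroys the summand structure, so the hypothesis $\mu.p' \sim_\B f(\alpha,q_n)$ with the witness condition is only vacuously or trivially applicable given that $f(\alpha,q_n)$ has several summands; ($e_6$) (choice) is where one genuinely uses that a summand bisimilar to $f(\alpha,q_n)$ on the left, say inside $p'+p''$, must sit inside $p'$ or $p''$, and one applies the induction hypothesis to the corresponding half; ($e_7$) ($p'\mathbin{\|} p'' \approx q'\mathbin{\|} q''$) and the analogous $f$-context case are dispatched using the decomposition lemma, which says the relevant terms are not of that form, so this case does not actually arise under the standing hypothesis $p,q\sim_\B f(\alpha,q_n)$. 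Finally, having established that $P_n$ is preserved, one concludes Theorem~\ref{thm:LaRa}, and hence — since $f(\alpha,q_n)$ satisfies $P_n$ while $\alpha q_n + \sum_{i=0}^n \alpha(\alpha\mathbin{\|}\bar\alpha^{\le i})$ does not (none of its summands, each of which has head prefix $\alpha$ leading to either $q_n$ or some $\alpha\mathbin{\|}\bar\alpha^{\le i}$, is bisimilar to $f(\alpha,q_n)$, whose norm is $1$ whereas these residuals have the wrong initial branching) — the family $\{\varepsilon_n\}_{n\ge 0}$ is the desired witness, completing this branch of Theorem~\ref{Thm:nonfin-en}.

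The main obstacle, as the authors themselves flag, is the decomposition lemma: proving that no $\mathbin{\|}$- or $f$-context assembled from subterms of bounded size can have a summand bisimilar to $f(\alpha,q_n)$. Because $f$'s semantics is only partially known (Proposition~\ref{prop:f_rules} leaves many rule sets for the non-distributing case), this requires a careful sub-case analysis of the possible rules, and one must verify in each sub-case that the ``fan'' of $\alpha$-derivatives $\{\alpha\mathbin{\|}\bar\alpha^{\le i}\}_{i\le n}$ cannot be reproduced — the soundness of \eqref{eq:intro} (PF) and a norm/size counting argument are the tools, but making the counting uniform across all admissible rule sets for $f$ is the delicate, technical heart of the proof.
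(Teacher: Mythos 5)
Your overall strategy --- induction on the equational derivation with the substitution-instance case as the crux, and the size bound $n>\size(t)$ for $t$ in $\E$ controlling how a summand bisimilar to $f(\alpha,q_n)$ can arise --- is exactly the proof-theoretic approach the paper describes (note that the survey itself gives no proof of Theorem~\ref{thm:LaRa}; it defers entirely to \cite{ACFIL21}, so I am comparing against the strategy it sketches). However, the ``decomposition lemma'' you place at the heart of the argument is false as stated, and it is also the one step you do not prove. A closed term bisimilar to $f(\alpha,q_n)$ \emph{can} be written nontrivially as a parallel composition: $\alpha \mathbin{\|} q_n \sim_\B f(\alpha,q_n)$. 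Indeed, on the pair $(\alpha,q_n)$ only the two rules of Equation~\eqref{eq:rules_ex2} can fire --- neither argument affords an initial $\bar{\alpha}$- or $\tau$-move, so none of the further rules required by Proposition~\ref{prop:f_rules} applies, and no synchronisation is possible --- and these rules yield precisely the transitions of $\alpha\mathbin{\|}q_n$. Consequently $\sigma(x\mathbin{\|}y)$ with $\sigma(x)=\alpha$ and $\sigma(y)=q_n$ is a $\mathbin{\|}$-context of constant size whose instance is bisimilar to $f(\alpha,q_n)$, refuting your claim that no $\mathbin{\|}$- or $f$-context assembled from bounded-size pieces can produce such a summand. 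What the argument actually needs is not that such decompositions are impossible, but a classification of the ways a summand of $\sigma(t)$ with $\size(t)<n$ can be bisimilar to $f(\alpha,q_n)$: the $n+1$ pairwise non-bisimilar $\alpha$-derivatives of $f(\alpha,q_n)$ must force a single variable $x$ of $t$ to carry the bulk of that behaviour, occurring in one of a small number of syntactic positions, and one must then show that the soundness of $t\approx u$ forces a matching summand in $\sigma(u)$. That classification and transfer argument are the entire content of the theorem; the induction over derivations and the cases $(e_2)$--$(e_7)$ are routine. As it stands, the proposal replaces the hard step by an incorrect statement together with an acknowledgement that establishing it is ``the delicate, technical heart'', so it is a plan rather than a proof.

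Two smaller points. First, the norm of $f(\alpha,q_n)$ is $2$, not $1$; the correct (and sufficient) reason why no summand of $\alpha q_n+\sum_{i=0}^{n}\alpha(\alpha\mathbin{\|}\bar{\alpha}^{\le i})$ is bisimilar to $f(\alpha,q_n)$ is the other one you give, namely that each such summand has a unique $\alpha$-derivative up to bisimilarity while $f(\alpha,q_n)$ has several pairwise non-bisimilar ones. Second, the claim that a variable summand witnessing the property on one side of an axiom must also witness it on the other side (``both sides contain the same variables modulo soundness constraints'') is itself a lemma that needs proof; it does not follow merely from the two sides having the same set of variables.
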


Then, we can conclude that the infinite collection of equations $\{\varepsilon_n\}_{n \ge 0}$ is the desired family showing the negative result: 
The left-hand side of equation $\varepsilon_n$, viz.~the term $f(\alpha,q_n)$, has a summand bisimilar to $f(\alpha,q_n)$, whilst the right-hand side, viz.~the term $\alpha q_n + \sum_{i=0}^{n} \alpha(\alpha \mathbin{\|} \bar{\alpha}^{\scriptstyle \le i})$, does not.
\end{example}


\section{Beyond bisimilarity: A journey in the spectrum}
\label{sec:spectrum}

\begin{figure*}[t]
\centering
\scalebox{0.9}{
\begin{tikzpicture}
\node at (0,7){bisimulation ($\sim_\B$)}; 
\draw[-latex](0,6.8)--(0,6.2);
\node at (0,6){\textcolor{red}{$2$-nested simulation ($\sim_{2\s}$)}};
\draw[-latex](0,5.8)--(0,5.2);
\node at (-1.9,5){\textcolor{blue}{failure simulation ($\sim_{\fail\s}$) $=$ ready simulation ($\sim_{\rs}$)}};
\draw[-latex](0,4.8)--(0,4.2);
\node at (0,4){\textcolor{blue}{ready trace ($\sim_\rtr$)}};
\draw[-latex](0,3.8)--(-2,3.2);
\draw[-latex](0,3.8)--(2,3.2);
\node at (-2,3){\textcolor{blue}{failure trace ($\sim_\ftr$)}};
\node at (2,3){\textcolor{blue}{readies ($\sim_\ready$)}};
\draw[-latex](-2,2.8)--(-0.2,2.2);
\draw[-latex](2,2.8)--(0,2.2);
\node at (0,2){\textcolor{blue}{failures ($\sim_\fail$)}};
\draw[-latex](0,1.8)--(0,1.2);
\node at (0,1){\textcolor{blue}{completed trace ($\sim_\ctr$)}};
\draw[-latex](0,0.8)--(0,0.2);
\node at (0,0){\textcolor{blue}{trace ($\sim_\tr$)}};
\draw[-latex](-0.1,4.8)--(-6.5,3.2);
\node at (-6.5,3){\textcolor{blue}{completed simulation ($\sim_{\mathtt{C}\s}$)}};
\draw[-latex](-6.5,2.8)--(-6.5,2.2);
\draw[-latex](-6.3,2.8)--(-0.2,1.2);
\node at (-6.5,2){\textcolor{blue}{simulation ($\sim_\s$)}};
\draw[-latex](-6.5,1.8)--(-0.2,0.2);
\draw[-latex](0.1,5.8)--(5.3,4.7);
\node at (5.3,4.5){\textcolor{red}{possible futures ($\sim_\pf$)}};
\draw[-latex](5.3,4.3)--(2.2,3.2);
\draw[dashed,ultra thick,ForestGreen](-5,5.7)--(1.7,5.4);
\draw[dashed,ultra thick,ForestGreen](1.7,5.4)--(2.5,5);
\draw[dashed,ultra thick,ForestGreen](2.5,5)--(4,1);
\end{tikzpicture}
}
\caption{The linear time-branching time spectrum \cite{vG90,vG01}.
The dashed (\textcolor{ForestGreen}{green}) line delineates the boundary between finite and non-finite axiomatisability of the behavioural congruences: 
For the ones below the boundary (in \textcolor{blue}{blue}) we provide a finite, ground-complete axiomatization; for the ones above the boundary (in \textcolor{red}{red}), we provide a negative result.
The case of bisimulation follows from \cite{Mo89,Mo90,Mo90a}.
\label{fig:spectrum}
}
\end{figure*}
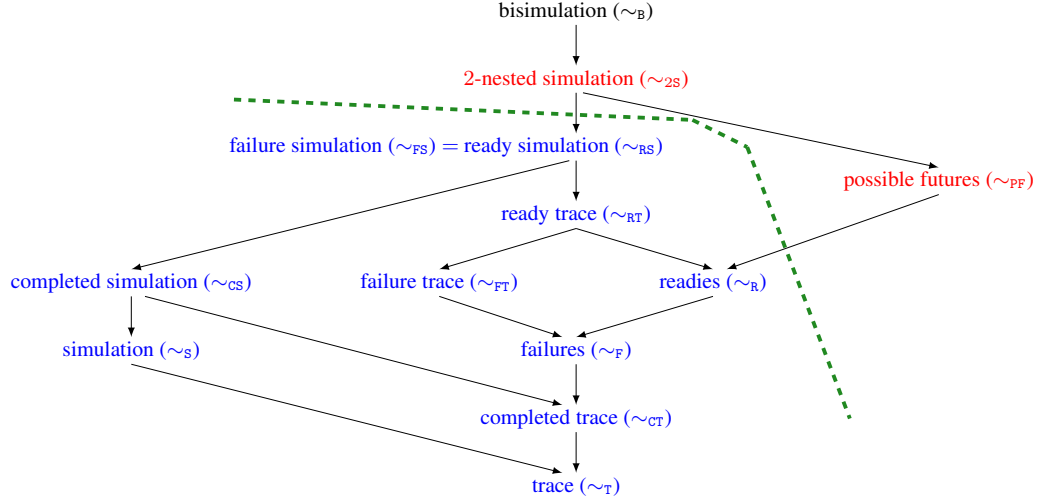

So far we have discussed the axiomatisability of the parallel composition operator modulo bisimilarity.
In the light of the plethora of other semantics in the spectrum, it is natural to address the following question: 
\begin{equation}
\tag{Q2}\label{eq:spectrum}
\parbox{\dimexpr\linewidth-4em}{
\strut
\emph{Can we obtain a finite axiomatisation of parallel composition, over CCS,} modulo congruences other \emph{than bisimilarity?}
\strut
}
\end{equation}
In our recent work \cite{ACILP20} we successfully delineated the \emph{boundary} between finite and non-finite axiomatisability of the congruences in the spectrum over the language CCS, as shown in Figure~\ref{fig:spectrum}.
(We remark that \cite{ACILP20} deals with the pure interleaving parallel composition operator. 
The results we are going to discuss on the full merge operator can be found in \cite{ACILP21}.)


\subsection{The positive part of the spectrum}

We start by briefly discussing the \emph{finite}, \emph{ground-complete} axiomatisation for \emph{ready simulation} semantics \cite{BIM95} obtained in \cite{ACILP20,ACILP21}.

\begin{definition}
[Ready simulation equivalence]
\label{def:rs}
Let $(\Proc,\Acttau,\trans[])$ be a LTS.
A \emph{simulation} is a binary relation ${\rel} \subseteq \Proc \times \Proc$ such that, whenever $p \rel q$ and $p \trans[a] p'$, then there is some $q'$ such that $q \trans[a] q'$ and $p' \rel q'$.
A \emph{ready simulation} is a simulation $\!\rel\!$ such that, whenever $p \rel q$ then $\init(p) = \init(q)$.
We write $p \sqsubseteq_\rs q$ if there is a ready simulation $\!\rel\!$ such that $p \rel q$.
We say that $p$ is \emph{ready simulation equivalent} to $q$, notation $p \sim_\rs q$, if $p \sqsubseteq_\rs q$ and $q \sqsubseteq_\rs p$.
\end{definition}

\begin{table*}[t]
\[
\begin{array}{ll}
\scalebox{0.95}{(RS)} & \mu (\nu x + \nu y + z) \approx \mu (\nu x + \nu y + z) + \mu (\nu x + z) \\[.2cm]
\scalebox{0.95}{(RSP1)} & (\mu x + \mu y + u) \mathbin{\|} (\nu z + \nu w + v) \approx (\mu x + u) \mathbin{\|} (\nu z+ \nu w + v) + + (\mu  y + u) \mathbin{\|} (\nu z + \nu w + v) + \\[.1cm]
& \phantom{(\mu x + \mu y + u) \mathbin{\|} (\nu z + \nu w + v) \approx}
+ (\mu x + \mu y + u) \mathbin{\|} (\nu z + v) + (\mu x + \mu y + u) \mathbin{\|} (\nu w + v) \\[.2cm]
\scalebox{0.95}{(RSP2)} & \displaystyle \Big(\sum_{i \in I} \mu_i x_i \Big) \mathbin{\|} (\nu y + \nu z + w) \approx \Big(\sum_{i \in I} \mu_i x_i \Big) \mathbin{\|} (\nu y + w) + \Big(\sum_{i \in I} \mu_i x_i \Big) \mathbin{\|} (\nu z + w) +  \\[.1cm]
& \displaystyle \phantom{\Big(\sum_{i \in I} \mu_i x_i \Big) \mathbin{\|} (\nu y + \nu z + w) \approx}
+ \sum_{i \in I} \mu_i \big(x_i \mathbin{\|} (\nu y + \nu z + w) \big) \\[.1cm]
& \text{where } \mu_j \neq \mu_k \text{ if } j \neq k \text{ for } j,k \in I \\[.2cm]
& \hspace{-1cm}\E_\rs = \E_1 \cup \{\textrm{RS, RSP1, RSP2, EL2}\}
\\[.2cm]
\end{array}
\]
\caption{Additional axioms for ready simulation equivalence.}
\label{tab:ready_sim_axioms}
\end{table*}

In \cite{vG90} it was proved that the axiom system consisting of $\E_0$ (Table~\ref{tab:basic_axioms}) together with axiom RS in Table~\ref{tab:ready_sim_axioms} is a ground-complete axiomatisation of BCCSP, i.e., the language that is obtained from CCS if $\mathbin{\|}$ is omitted, modulo $\sim_\rs$.
Hence, to obtain a finite, ground-complete axiomatisation of CCS modulo $\sim_\rs$ it suffices to enrich the axiom system $\E_1 \cup \{\mathrm{RS}\}$ with finitely many axioms allowing one to eliminate all occurrences of $\mathbin{\|}$ from closed CCS terms, that is to prove that for every closed CCS term $p$ there is a closed BCCSP term $q$ such that $\E_\rs \vdash p \approx q$.
Then, the completeness of the proposed axiom system over CCS is a direct consequence of that over BCCSP proved in \cite{vG90}.

Clearly, EL would allow us to obtain the desired elimination, but, as previously mentioned, it is a schema that finitely presents an infinite collection of equations, and thus an axiom system including it is infinite.
However, thanks to the schemata RSP1 and RSP2 that characterise the distributivity of $\mathbin{\|}$ over $+$ modulo $\sim_\rs$ (see Table~\ref{tab:ready_sim_axioms}), we can include EL2 instead, which is a variant of EL that generates only finitely many axioms (see Table~\ref{tab:exp_law}).

Since the axioms for the elimination of parallel composition modulo ready simulation equivalence are, of course, sound with respect to equivalences that are coarser than ready simulation equivalence, the `reduction to ground-completeness over BCCSP' works for all behavioural equivalences in the spectrum below ready simulation equivalence. 
Nevertheless, for those equivalences, we offer more elegant axioms to equationally eliminate parallel composition from closed terms, as reported in Table~\ref{tab:spectrum_axioms}.
We also observe a sort of parallelism between the axiomatisations for the notions of simulation and the corresponding decorated trace semantics: the axioms used to equationally express the interplay between the interleaving operator and the other operators of BCCSP in a decorated trace semantics can be seen as the \emph{linear counterpart} of those used in the corresponding notion of simulation semantics.
For instance, while the axioms for ready simulation impose constraints on the form of both arguments of the interleaving operator to facilitate equational reductions, those for ready trace equivalence impose similar constraints but only on one argument.

\begin{table*}
\centering
\begin{tabular}{l}
\scalebox{0.95}{(CS)} \, $\mu ( \nu x + y + z) \approx \mu (\nu x + y + z) + \mu (\nu x + z)$ \\[.2cm]
\scalebox{0.95}{(CSP1)} $(\mu x + \nu y + u) \mathbin{\|} (\xi z+ \zeta w + v) \approx (\mu x + u) \mathbin{\|} (\xi z + \zeta w + v) + (\nu y + u) \mathbin{\|} (\xi z + \zeta w + v) +$\\[.1cm]
\qquad\quad\phantom{$(\mu x + \nu y + u) \mathbin{\|} (\xi z + \zeta w + v) \approx$}
$+ (\mu x + \nu y + u) \mathbin{\|} (\xi z + v) + (\mu x + \nu y + u) \mathbin{\|} (\zeta w + v)$ \\[.2cm]
\scalebox{0.95}{(CSP2)} $\mu x \mathbin{\|} (\nu y + \xi z + w) \approx \mu (x \mathbin{\|} (\nu y + \xi z + w)) + \mu x \mathbin{\|} (\nu y + w) + \mu x \mathbin{\|} (\xi z + w)$ \\[.2cm]
\; $\E_\cs = \E_1 \cup \{\textrm{CS, CSP1, CSP2, EL1}\}$
\\[.1cm]
\hline 
\hline 
\\
\scalebox{0.95}{(S)} \, $\mu (x+y) \approx \mu(x+y) + \mu x$ \\[.2cm]
\scalebox{0.95}{(SP1)} \, $(x+y) \mathbin{\|} (z+w) \approx x \mathbin{\|} (z+w) + y \mathbin{\|} (z+w) + (x+y) \mathbin{\|} z + (x+y) \mathbin{\|} w$ \\[.2cm]
\scalebox{0.95}{(SP2)} \, $\mu x \mathbin{\|} (y+z) \approx \mu (x \mathbin{\|} (y+z)) + \mu x \mathbin{\|} y + \mu x \mathbin{\|} z$ \\[.2cm]
\; $\E_\s = \E_1 \cup \{\textrm{S, SP1, SP2, EL1}\}$ \\[.05cm]
\hline 
\hline 
\\
\, \scalebox{0.95}{(RT)} \; $\mu \left(\sum_{i=1}^{|\Act|}(\nu_i x_i + \nu_i y_i) + z\right) \approx \mu \left(\sum_{i=1}^{|\Act|}\nu_i x_i + z\right) + \mu \left(\sum_{i=1}^{|\Act|}\nu_i y_i + z\right)$ \\[.2cm]
\, \scalebox{0.95}{(FP)} \; $(\mu x + \mu y + w) \mathbin{\|} z \approx (\mu x + w) \mathbin{\|} z + (\mu y + w) \mathbin{\|} z$ \\[.2cm]
\; $\E_\rtr = \E_1 \cup \{\textrm{RT}, \textrm{FP}, \textrm{EL2}\}$
\\[.1cm]
\hline
\hline
\\
\, \scalebox{0.95}{(FT)} \; $\mu x + \mu y \approx \mu x + \mu y + \mu(x + y)$ \\[.2cm]
\; $\E_\ftr = \E_1 \cup \{\textrm{FT},\textrm{RS},\textrm{FP},\textrm{EL2}\}$
\\[.1cm]
\hline
\hline
\\
\, \scalebox{0.95}{(R)} \; $\mu( \nu x + z) + \mu (\nu y + w) \approx \mu (\nu x + \nu y + z) + \mu (\nu y + w)$ \\[.2cm]
\; $\E_\ready = \E_1 \cup \{\textrm{R, FP, EL2}\}$
\\[.1cm]
\hline
\hline
\\
\, \scalebox{0.95}{(F)} \; $\mu x + \mu(y + z) \approx \mu x + \mu(x + y) + \mu(y + z)$ \\[.2cm]
\; $\E_\fail = \E_1 \cup \{\textrm{F, R, FP, EL2}\}$
\\[.1cm]
\hline
\hline
\\
\, \scalebox{0.95}{(CT)} \; $\mu (\nu x + z) + \mu (\xi y + w) \approx \mu (\nu x + \xi y + z + w)$ \\[.2cm]
\, \scalebox{0.95}{(CTP)} \; $(\mu x + \nu y + w) \mathbin{\|} z \approx (\mu x + w) \mathbin{\|} z + (\nu y + w) \mathbin{\|} z$ \\[.2cm]
\; $\E_\ctr = \E_1 \cup\{\textrm{CT, CTP, EL1}\}$
\\[.1cm]
\hline
\hline
\\
\, \scalebox{0.95}{(T)} \; $\mu x + \mu y \approx \mu (x + y)$ \\[.2cm]
\, \scalebox{0.95}{(TP)} \; $(x + y) \mathbin{\|} z \approx x \mathbin{\|} z + y \mathbin{\|} z$ \\[.2cm]
\; $\E_\tr = \E_1 \cup\{\textrm{T, TP, EL1}\}$
\\[.2cm]
\end{tabular}
\caption{Additional axioms for the semantics in the spectrum that are coarser than ready simulation.
}
\label{tab:spectrum_axioms}
\end{table*}

Our results can be then summarised as follows:

\begin{theorem}
\label{thm:spectrum_positive}
Let $\mathtt{X} \in \{\rs,\cs,\s,\rtr,\ftr,\ready,\fail,\ctr,\tr\}$.
The axiom system $\E_{\mathtt{X}}$ is sound and ground-complete modulo $\sim_{\mathtt{X}}$ over CCS.
\end{theorem}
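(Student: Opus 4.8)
The plan is to follow the two-phase route already sketched in the text: first establish soundness of each $\E_{\mathtt{X}}$ modulo $\sim_{\mathtt{X}}$, and then obtain ground-completeness over CCS by reducing it, via an \emph{elimination theorem} for $\mathbin{\|}$, to the already known ground-completeness over BCCSP. Soundness is checked axiom by axiom. The basic axioms $\E_1$, the finitary expansion-law instances (which also use P0 and P1), and the BCCSP axioms proper to each semantics (RS, CS, S, RT, FT, R, F, CT, T) are all sound modulo $\sim_\B$, hence modulo the coarser $\sim_{\mathtt{X}}$. The only genuinely new obligations are the \emph{guarded distributivity} axioms RSP1/RSP2, CSP1/CSP2, SP1/SP2, FP, CTP and TP: here one verifies directly that, although $\mathbin{\|}$ does not distribute over $+$ modulo bisimilarity, it does so modulo $\sim_{\mathtt{X}}$ in exactly the guarded shape prescribed by the axiom (e.g.\ $(x+y)\mathbin{\|} z \sim_\tr x\mathbin{\|} z + y\mathbin{\|} z$, and the more constrained forms for the simulation-based semantics), which is a routine but semantics-specific computation on the relevant traces, failures, readies and (ready) simulations.

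The crux is the elimination theorem: for each $\mathtt{X}$, every closed CCS term $p$ satisfies $\E_{\mathtt{X}} \vdash p \approx q$ for some closed BCCSP term $q$. I would prove this by induction on the nesting depth of $\mathbin{\|}$ in $p$, the interesting case being $p = p_1 \mathbin{\|} p_2$ with $p_1,p_2$ already reduced to BCCSP terms; using the BCCSP axioms one brings these into summation form $p_1 = \sum_{i\in I}\mu_i r_i$ and $p_2 = \sum_{j\in J}\nu_j s_j$ with $r_i,s_j$ BCCSP terms. For the semantics whose axiom set contains unconditional or fully guarded distributivity laws (T/TP, CT/CTP, S/SP1/SP2, CS/CSP1/CSP2) one uses those laws together with P1 to rewrite $p_1 \mathbin{\|} p_2$ as a sum of terms $\mu_i r_i \mathbin{\|} \nu_j s_j$ and applies the designated two-summand expansion-law axiom to each; the resulting parallel subterms $r_i \mathbin{\|} (\nu_j s_j)$, $(\mu_i r_i) \mathbin{\|} s_j$ and $r_i \mathbin{\|} s_j$ are all strictly smaller in the chosen measure, so the induction hypothesis applies. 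For the semantics built on EL2 (RS, RT, FT, R, F) one instead uses the distributivity laws (FP, or the more delicate RSP1/RSP2) to bring \emph{both} arguments into distinct-prefix form, after which EL2 applies directly and again yields strictly smaller parallel subterms.

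Given the elimination theorem, ground-completeness over CCS is the standard argument: if $p \sim_{\mathtt{X}} q$ for closed CCS terms $p,q$, eliminate $\mathbin{\|}$ to obtain closed BCCSP terms $p',q'$ with $\E_{\mathtt{X}} \vdash p \approx p'$ and $\E_{\mathtt{X}} \vdash q \approx q'$; by soundness $p' \sim_{\mathtt{X}} q'$; the $\mathbin{\|}$-free part of $\E_{\mathtt{X}}$ (namely $\E_0$ together with the decorated-trace or simulation axiom for $\mathtt{X}$) is exactly one of the axiom systems shown ground-complete for $\sim_{\mathtt{X}}$ over BCCSP in \cite{vG90,BFN03}, so $\E_{\mathtt{X}} \vdash p' \approx q'$; and transitivity yields $\E_{\mathtt{X}} \vdash p \approx q$. (Throughout, $\Act$ is assumed finite, so that RT in particular is a finite axiom schema.)

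I expect the main obstacle to be the elimination theorem in the ready-simulation and complete-simulation cases. There $\mathbin{\|}$ distributes over $+$ only in the tightly constrained shapes of RSP1/CSP1 (a duplicated initial action required on \emph{both} sides) and RSP2/CSP2 (only a \emph{partial} expansion, applicable when one side has pairwise distinct prefixes), so one cannot simply flatten $p_1 \mathbin{\|} p_2$ to binary parallel compositions. Showing that these restricted laws, together with EL2 (resp.\ EL1) and the saturation axiom RS (resp.\ CS), still drive $p_1 \mathbin{\|} p_2$ all the way down to a BCCSP term requires a carefully designed terminating rewriting strategy — organised, say, by the multiset of initial actions of the two arguments together with the depths of their derivatives, and exploiting that RS and CS, although size-increasing, leave this measure unchanged — and this is where essentially all of the technical effort is concentrated.
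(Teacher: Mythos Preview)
Your overall strategy coincides with the paper's: the survey does not give a detailed proof of this theorem but sketches exactly the two-phase route you describe---soundness of $\E_{\mathtt{X}}$, an elimination theorem for $\mathbin{\|}$ using the guarded distributivity and finitary expansion-law schemata, and then an appeal to the known ground-completeness of the BCCSP fragment of $\E_{\mathtt{X}}$ from \cite{vG90,BFN03}. Your identification of the RS/CS elimination as the most delicate case, and your suggestion that the axioms for the coarser semantics can in principle be replaced by the RS axioms, are also in line with the paper's remarks.

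One factual slip: you write that ``the BCCSP axioms proper to each semantics (RS, CS, S, RT, FT, R, F, CT, T) are all sound modulo $\sim_\B$, hence modulo the coarser $\sim_{\mathtt{X}}$.'' That is false---none of those axioms is sound modulo bisimilarity (for instance $\mathrm{T}\colon \mu x + \mu y \approx \mu(x+y)$ certainly fails for $\sim_\B$, and the same goes for RS, S, CT, etc.). These are precisely the axioms that separate the coarser semantics from bisimilarity; their soundness modulo $\sim_{\mathtt{X}}$ is established directly in \cite{vG90,BFN03}, not via $\sim_\B$. The argument is unaffected once you drop the unwarranted ``modulo $\sim_\B$'' step and just invoke their soundness modulo $\sim_{\mathtt{X}}$.
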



\subsection{The negative part of the spectrum}

We complete our journey in the spectrum by showing that \emph{nested simulation} and \emph{nested trace} semantics do not have a finite axiomatisation over CCS.
To this end, we adapt Moller's arguments to the effect that bisimilarity is not finitely based over CCS to obtain the negative result for \emph{possible futures equivalence} \cite{RB81}, also known as $2$-\emph{nested trace equivalence}.

\begin{definition}
[Possible futures equivalence]
\label{def:pf}
A \emph{possible future} of a process $p$ is a pair $(\varphi,X)$ where $\varphi \in \Acttau^{*}$ and $X \subseteq \Acttau$ such that $p \trans[\varphi] p'$ for some $p'$ and $X$ is the set of traces of $p'$.
We write $\pf(p)$ for the set of possible futures of $p$.
Two processes $p$ and $q$ are said to be \emph{possible futures equivalent}, denoted $p \sim_\pf q$, if $\pf(p) = \pf(q)$.
\end{definition}

Consider the infinite family of equations $\{e_N\}_{N \ge 1}$ given, for $a \neq b$, by:
\begin{align*}
& p_N = \sum_{i = 1}^N b^i a 
& (N \ge 1) \enspace \phantom{.} \\
& e_N \; \colon \; a \mathbin{\|} p_N \approx a p_N + \sum_{i = 1}^{N} b (a \mathbin{\|} b^{i-1}a)   
& (N \ge 1) \enspace .
\end{align*}

Notice that the equations $e_N$ are sound modulo possible futures equivalence for all $N \ge 1$.

We also notice that none of the summands in the right-hand side of equation $e_N$ is, alone, possible futures equivalent to $a \mathbin{\|} p_N$.
However, we now proceed to show that, when $N$ is large enough, having a summand possible futures equivalent to $a \mathbin{\|} p_N$ is an invariant under provability from finite sound axiom systems.
Hence, the negative result for possible futures equivalence is a consequence of the following theorem:

\begin{theorem}
\label{thm:pf_axiom_derivation}
Let $\E$ be a finite axiom system over CCS that is sound modulo $\sim_\pf$.
Let $N$ be larger than the size of each term in the equations in $\E$.
Assume that processes $p$ and $q$ are such that $p,q \sim_\pf a \mathbin{\|} p_N$.
If $\E \vdash p \approx q$ and $p$ has a summand possible futures equivalent to $a \mathbin{\|} p_N$, then so does $q$.
\end{theorem}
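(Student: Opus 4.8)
The plan is to prove the statement by induction on the derivation of $p \approx q$ from $\E$. Since the hypotheses on $p$ and $q$ are symmetric, it is convenient to establish the logically equivalent biconditional: under the standing assumptions that $p,q \sim_\pf a \mathbin{\|} p_N$, that $\E\vdash p\approx q$, and that $N$ exceeds the size of every term occurring in $\E$, the process $p$ has a summand $\sim_\pf a \mathbin{\|} p_N$ \emph{if and only if} $q$ does. As recalled in Section~\ref{sec:back}, one may assume that rule $(e_4)$ is applied only to axioms of $\E$, so that the derivation is built from reflexivity, symmetry, transitivity, the congruence rules $(e_5)$--$(e_7)$, and substitution instances of axioms of $\E$. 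Soundness of $\E$ modulo $\sim_\pf$ ensures that every term occurring in the derivation is provably equal to $p$, hence is itself $\sim_\pf a \mathbin{\|} p_N$; so the standing assumption propagates to all sub-derivations. The cases of $(e_1)$ and $(e_2)$ are immediate, and $(e_3)$ follows by chaining the two instances of the induction hypothesis.

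For the congruence rules the key elementary fact is that $a \mathbin{\|} p_N$ has exactly two initials, namely $a$ and $b$ (here $a \neq b$). Consequently a term of the form $\mu.r$ has a single initial and can never be $\sim_\pf a \mathbin{\|} p_N$; this makes the case of $(e_5)$ vacuous, since its conclusion $\mu.p'$ cannot satisfy the standing assumption. For $(e_7)$, a parallel composition is a single summand, so for such a term ``having a summand $\sim_\pf a\mathbin{\|} p_N$'' coincides with ``being $\sim_\pf a \mathbin{\|} p_N$'', a property transferred from $p_1\mathbin{\|} p_2$ to $q_1\mathbin{\|} q_2$ by soundness of $\E$ together with the fact that $\sim_\pf$ is a congruence. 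For $(e_6)$, if $p_1+p_2$ has a summand $r \sim_\pf a\mathbin{\|} p_N$, then $r$ is a summand of one of the $p_i$, say $p_1$; using that $\mathrm{traces}(a\mathbin{\|} p_N) \subseteq \mathrm{traces}(p_1)\subseteq\mathrm{traces}(p_1+p_2)=\mathrm{traces}(a\mathbin{\|} p_N)$ and that every possible future of $r$ reached along a non-empty trace is also one of $p_1$, one checks that $p_1\sim_\pf a\mathbin{\|} p_N$; soundness then gives $q_1\sim_\pf a\mathbin{\|} p_N$, and the induction hypothesis applied to $p_1\approx q_1$ produces a summand of $q_1$, and hence of $q_1+q_2$, that is $\sim_\pf a\mathbin{\|} p_N$. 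The reverse implication is symmetric.

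The heart of the proof is the base case, in which $p = \sigma(t)$ and $q = \sigma(u)$ for an axiom $t\approx u\in\E$, a closed substitution $\sigma$, with $\sigma(t),\sigma(u)\sim_\pf a\mathbin{\|} p_N$ and $N$ larger than $\size(t)$ and $\size(u)$. Write $t$ as a sum of summands, none of which is itself a sum. Each summand of $\sigma(t)$ is then (i) a summand of $\sigma(x)$ for some variable summand $x$ of $t$, (ii) a term $\mu.\sigma(t')$ for some prefixed summand $\mu.t'$ of $t$, or (iii) a term $\sigma(t')\mathbin{\|}\sigma(t'')$ for some parallel summand $t'\mathbin{\|} t''$ of $t$. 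A summand of $\sigma(t)$ that is $\sim_\pf a\mathbin{\|} p_N$ has two initials, so possibility (ii) is excluded; it remains to show that in cases (i) and (iii) the term $\sigma(u)$ must also have a summand $\sim_\pf a\mathbin{\|} p_N$. This is precisely where the bound $N>\size(t),\size(u)$ is used: the ``$b$-branching width'' of $a\mathbin{\|} p_N$ equals $N$, whereas the syntax trees of $t$ and $u$ are too small to generate such branching on their own, so a summand $\sim_\pf a\mathbin{\|} p_N$ cannot be conjured up inside $\sigma(u)$ from nothing; combined with soundness of the axiom $t\approx u$ (which, instantiated at suitably chosen test substitutions, pins down where in $u$ the variable $x$ of case (i) --- respectively the parallel sub-structure of case (iii) --- must reappear), this yields the required summand of $\sigma(u)$. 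I expect this base case to be the main obstacle: in particular the decomposition analysis establishing that $\sigma(t')\mathbin{\|}\sigma(t'')\sim_\pf a\mathbin{\|} p_N$ can only happen when one of the two factors already accounts, up to $\sim_\pf$, for all of $a\mathbin{\|} p_N$ (or the two factors split it as $a$ and $p_N$), together with the supporting combinatorial analysis of the possible futures of $a\mathbin{\|} p_N$, will require the bulk of the technical work.
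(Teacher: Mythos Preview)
The paper is a survey and does not actually contain a proof of Theorem~\ref{thm:pf_axiom_derivation}; the result is only stated and attributed to the companion papers. So there is no in-paper proof to compare against. What the paper does offer (in Section~\ref{sec:classic}) is a description of Moller's \emph{proof-theoretic approach}, and your proposal follows exactly that template: induction on the equational derivation, with the substantive work concentrated in the substitution-instance base case. In that sense your plan is the expected one.

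Your handling of the equational-logic rules is fine. The prefix case $(e_5)$ is correctly discharged by the initials mismatch; the parallel case $(e_7)$ is correctly disposed of by observing that a $\mathbin{\|}$-headed term is its own unique summand, so the claim follows directly from soundness without invoking the induction hypothesis; and your argument in the sum case $(e_6)$ that $p_1\sim_\pf a\mathbin{\|}p_N$ (so that the induction hypothesis becomes applicable to the sub-derivation $p_1\approx q_1$) is valid, using that possible-futures along non-empty traces are inherited both upward from the summand $r$ and downward from $p_1+p_2$, while the empty-trace future is fixed by the trace-set equality you establish.

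The genuine gap is the base case, which you yourself flag as ``the main obstacle''. Your sketch names the right ingredients (the $N$ pairwise $\sim_\pf$-inequivalent $b$-derivatives of $a\mathbin{\|}p_N$, the size bound, a parallel-decomposition lemma) but does not prove anything. Two points deserve emphasis. First, sub-case~(i) is not easier than sub-case~(iii): once you know that a variable summand $x$ of $t$ supplies the distinguished summand via $\sigma(x)$, you still need to argue that $x$ reappears in $u$ at top level (modulo a $\mathbin{\|}$-context equivalent to $\nil$), and this is where the size bound and soundness under test substitutions are really exercised. Second, in sub-case~(iii) the dichotomy you state (``one factor accounts for all of $a\mathbin{\|}p_N$, or the split is $a$ versus $p_N$'') is morally right but is itself a non-trivial decomposition lemma modulo $\sim_\pf$; establishing it, together with showing that the size bound then forces the relevant piece to come from a variable, is the technical heart of the proof and is not covered by your outline.
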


Then, we exploit the soundness modulo bisimilarity of the equations in the family $\{e_N\}_{N \ge 1}$ to extend the negative result to all the congruences that are finer than possible futures and coarser than bisimilarity, thus including all nested trace and nested simulation semantics.

\begin{theorem}
\label{thm:spectrum_negative}
Assume that ${|\Act|}\ge 2$.
Let $n \ge 2$.
Then, $n$-nested trace equivalence and $n$-nested simulation equivalence admit no finite, ground-complete, equational axiomatisation over the language CCS.
\end{theorem}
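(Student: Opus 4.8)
The plan is to derive Theorem~\ref{thm:spectrum_negative} from Theorem~\ref{thm:pf_axiom_derivation} in two moves: first establishing the negative result for possible futures equivalence ($2$-nested trace equivalence) itself, and then lifting it to all the finer congruences up to bisimilarity. For the base case, suppose towards a contradiction that $\E$ is a finite axiom system over CCS that is sound and ground-complete modulo $\sim_\pf$. Pick $N$ larger than the size of every term occurring in the (finitely many) equations of $\E$. Since the equation $e_N$ is sound modulo $\sim_\pf$, ground-completeness gives $\E \vdash a \mathbin{\|} p_N \approx a p_N + \sum_{i=1}^{N} b(a \mathbin{\|} b^{i-1}a)$. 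The left-hand side trivially has a summand ($a \mathbin{\|} p_N$ itself) that is possible-futures equivalent to $a \mathbin{\|} p_N$, and both sides are $\sim_\pf$-equivalent to $a \mathbin{\|} p_N$, so Theorem~\ref{thm:pf_axiom_derivation} forces the right-hand side to have such a summand too. One then checks, by a direct analysis of the possible futures of the summands $a p_N$ and $b(a \mathbin{\|} b^{i-1}a)$, that none of them is possible-futures equivalent to $a \mathbin{\|} p_N$ --- this is the combinatorial heart already flagged in the remark preceding Theorem~\ref{thm:pf_axiom_derivation} --- yielding the desired contradiction.

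For the lifting step, let $\mathord{\sim}$ be any congruence over CCS that is finer than $\sim_\pf$ and coarser than $\sim_\B$; by the structure of the spectrum in Figure~\ref{fig:spectrum}, every $n$-nested trace and $n$-nested simulation equivalence with $n \ge 2$ lies in this band. The key observation is that each $e_N$ is sound modulo bisimilarity (the two sides have matching single-step behaviour and introduce no spurious causal dependencies, exactly as in Moller's original argument for the instances $I_n$), hence sound modulo $\mathord{\sim}$ as well, since $\mathord{\sim}$ is coarser than $\sim_\B$. Now suppose $\E'$ were a finite, ground-complete axiomatisation of CCS modulo $\mathord{\sim}$. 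Then $\E'$ is in particular sound modulo $\sim_\pf$: soundness modulo the finer relation $\mathord{\sim}$ implies soundness modulo the coarser $\sim_\pf$. Choosing $N$ larger than the size of every term in $\E'$ and invoking Theorem~\ref{thm:pf_axiom_derivation} with this $\E'$, the same summand argument shows $\E' \not\vdash e_N$; but ground-completeness of $\E'$ modulo $\mathord{\sim}$ together with soundness of $e_N$ modulo $\mathord{\sim}$ would require $\E' \vdash e_N$ --- a contradiction. Hence no such $\E'$ exists.

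The genuinely delicate part is Theorem~\ref{thm:pf_axiom_derivation}, which the excerpt states but does not prove; in a self-contained write-up this is where the real work lies. One would proceed by the proof-theoretic method described in Section~\ref{sec:classic}: induct on the structure of an equational derivation $\E \vdash p \approx q$, with the invariant ``$p,q \sim_\pf a \mathbin{\|} p_N$ and both have a summand $\sim_\pf$-equivalent to $a \mathbin{\|} p_N$''. The base case handles substitution instances of axioms in $\E$, where the size constraint $N > \size(t)$ for axiom terms $t$ is exploited to argue that a summand as large and as ``spread out'' as $a \mathbin{\|} p_N$ cannot be created by a closed substitution into a small axiom term --- it must already be essentially present, in a form that forces the matching side to exhibit it too. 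The inductive cases for transitivity and for the congruence rules ($e_5$, $e_6$, $e_7$) are then comparatively routine, though the $\mathbin{\|}$-congruence rule $e_7$ requires care since $a \mathbin{\|} p_N$ is itself a parallel composition. I expect this axiom-instance analysis, together with pinning down the precise ``bisimulation-up-to-summands'' normal form for terms possible-futures equivalent to $a \mathbin{\|} p_N$, to be the main obstacle; the deduction of Theorem~\ref{thm:spectrum_negative} from it, as sketched above, is straightforward once that machinery is in place.
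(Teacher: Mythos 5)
Your proposal is correct and follows essentially the same route as the paper: establish the result for possible futures equivalence by combining Theorem~\ref{thm:pf_axiom_derivation} with the observation that no summand of the right-hand side of $e_N$ is possible-futures equivalent to $a \mathbin{\|} p_N$, and then lift to every congruence lying between bisimilarity and possible futures by exploiting that the $e_N$ are sound modulo $\sim_\B$ while soundness of an axiom system modulo a finer congruence implies soundness modulo $\sim_\pf$. Your closing remarks on how Theorem~\ref{thm:pf_axiom_derivation} itself would be proved also match the proof-theoretic strategy the paper attributes to Moller's technique.
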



\section{Introducing silent steps}
\label{sec:weak}

In the previous sections we have considered the $\tau$ action as an observable move by a process.
We now switch from strong to weak semantics: in this new setting, a $\tau$-move corresponds to a \emph{silent} (or \emph{hidden}, \emph{invisible}) step in the behaviour of a process.
In detail, we are interested in studying an equational characterisation of the parallel composition operator modulo \emph{rooted weak bisimilarity}.

Let $\trans[\varepsilon]$ denote the reflexive and transitive closure of the transition $\trans[\tau]$. 
Then, let $p \xRightarrow{\,\mu\,} q$ be a shorthand for $p \trans[\varepsilon] \trans[\mu] \trans[\varepsilon] q$, if $\mu \neq \tau$, or $p \trans[\varepsilon] q$ if $\mu = \tau$.
Finally, let $p \xRightarrow{\,\hat{\mu}\,} q$ be a shorthand for $p \trans[\varepsilon] \trans[\mu] \trans[\varepsilon] q$, for all $\mu \in \Acttau$ (notice that $\xRightarrow{\,\hat{\mu}\,}$ differs from $\xRightarrow{\,\mu\,}$ only when $\mu = \tau$).

\begin{definition}
[Rooted weak bisimilarity]
\label{def:rwb}
Let $(\Proc,\Acttau,\trans[])$ be a LTS.
\emph{Weak bisimilarity}, denoted by $\sim_\wb$, is the largest binary symmetric relation over $\Proc$ such that whenever $p \sim_\wb q$ and $p \trans[\mu] p'$, then either
\begin{itemize}
\item $\mu = \tau$ and $p' \rel q$, or
\item there is a processes $q'$ such that $q \xRightarrow{\,\mu\,} q'$ and $p' \rel q'$.
\end{itemize}
Then, \emph{rooted weak bisimilarity}, denoted by $\sim_\rwb$, is the binary symmetric relation over $\Proc$ such that whenever $p \sim_\rwb q$ and $p \trans[\mu] p'$, then there is a process $q'$ such that $q \xRightarrow{\,\hat{\mu}\,} q'$ and $p' \sim_\wb q'$.
\end{definition}

It is well known that rooted weak bisimilarity is an equivalence relation, and that the root condition is necessary to guarantee the compositionality with respect to the nondeterministic choice operator (as well as the left merge), see, e.g., \cite{BK85,vGW96}, and thus that $\sim_\rwb$ is a congruence over CCS.

In this section we present the original contribution of our survey, namely a negative answer to the following problem:
\begin{equation}
\tag{Q3}\label{eq:weak}
\parbox{\dimexpr\linewidth-4em}{
\strut
\emph{Can we obtain a finite axiomatisation of parallel composition} modulo rooted weak bisimilarity \emph{over CCS}?
\strut
}
\end{equation}
Our aim is to prove the following theorem:

\begin{theorem}
\label{thm:rwb_not_fin}
Rooted weak bisimilarity has no finite, complete axiomatisation over CCS. 
\end{theorem}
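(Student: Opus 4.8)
The plan is to adapt Moller's proof-theoretic technique, in the form already used for strong bisimilarity in Section~\ref{sec:classic}, to the weak setting. The key observation is that the family $\{M_n\}_{n\ge 1}$ of Moller's equations in Table~\ref{tab:exp_law} (the equations he introduced in \cite{Mo90a}) is sound not only modulo strong bisimilarity but modulo rooted weak bisimilarity as well: the terms on the two sides of $M_n$ match single-step behaviour exactly and introduce no causal dependencies, and these properties are insensitive to whether $\tau$ is treated as observable. In particular the closed instances $I_n$, built only from the action $a$ (which I will take to be an ordinary visible action, so $a\neq\overline a$ and in particular $a\neq\tau$), remain sound modulo $\sim_\rwb$. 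Since the processes appearing in $I_n$ perform no $\tau$-transitions at all, on these particular processes $\sim_\rwb$ coincides with strong bisimilarity $\sim_\B$; this is the crucial point that lets us reuse Moller's combinatorial analysis almost verbatim.

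First I would fix the candidate witness family to be exactly Moller's instances $\{I_n\}_{n\ge 1}$ and record that each $I_n$ is sound modulo $\sim_\rwb$. Second, I would identify the term property $P_n$ — having a summand $\sim_\wb$-equivalent (equivalently, since no silent moves occur, $\sim_\B$-equivalent) to $(a+aa)\mathbin{\|}\sum_{i=1}^{n}a^i$ — and establish the analogue of Moller's invariance lemma: if $\E$ is a finite axiom system sound modulo $\sim_\rwb$, if $n$ exceeds the size of every term occurring in $\E$, if $p,q$ are closed terms with $p,q\sim_\rwb (a+aa)\mathbin{\|}\sum_{i=1}^{n}a^i$, and if $\E\vdash p\approx q$ with $p$ satisfying $P_n$, then $q$ satisfies $P_n$ as well. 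Third, observing that the left-hand side of $I_n$ has a summand bisimilar to $(a+aa)\mathbin{\|}\sum_{i=1}^{n}a^i$ while no single summand on the right-hand side is, the invariance lemma forbids $\E\vdash I_n$ once $n$ is large enough, which gives Theorem~\ref{thm:rwb_not_fin}. Finally, as promised in the Introduction, I would note that the argument only used three features of $\sim_\rwb$ — that it coincides with $\sim_\B$ on $\tau$-free processes, that it imposes the root condition on initial silent moves, and that Moller's equations $M_n$ are sound for it — so the same proof yields the result for rooted branching, rooted delay, and rooted $\eta$-bisimilarity.

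The main obstacle is the invariance lemma, i.e. showing $P_n$ is preserved by equational derivation. The delicate part is not the congruence rules $(e_1)$–$(e_7)$ (which are handled by structural induction much as in Moller's original argument, using that $\mathbin{\|}$ does not distribute over $+$ and that the relevant summand is "large and parallel") but the substitution step $(e_4)$: one must argue that a finite sound axiom $t\approx u$, instantiated by a closed substitution $\sigma$ so that $\sigma(t),\sigma(u)\sim_\rwb (a+aa)\mathbin{\|}\sum_{i=1}^{n}a^i$, cannot "create" the large parallel summand on one side unless it is already present, up to $\sim_\wb$, on the other. Here one uses that $n$ is larger than $\size(t)$, so the summand $(a+aa)\mathbin{\|}\sum_{i=1}^{n}a^i$ — whose "width" $n$ is too big to be produced by the axiom's own structure — must originate from the substituted subterms, and then a careful case analysis on the shape of $t$ (essentially Moller's, transported to the weak setting via the $\tau$-freeness of the target process) closes the argument. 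A secondary subtlety is that weak bisimilarity is not compositional for $+$; this is exactly why the root condition matters, and why the reasoning is carried out at the level of $\sim_\rwb$ for whole terms while only invoking $\sim_\wb$ when comparing the residual summands, all of which are themselves $\tau$-free in the case at hand.
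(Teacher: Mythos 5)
Your overall strategy---exhibit a witness family sound modulo $\sim_\rwb$ and show that a term property is invariant under derivations from finite sound axiom systems---has the right shape, but the central step, your ``invariance lemma'', contains a genuine gap, and it is exactly the gap that the paper's proof is designed to circumvent. You claim that because the closed instances $I_n$ are $\tau$-free, $\sim_\rwb$ coincides with $\sim_\B$ on the relevant processes and Moller's combinatorial analysis transfers ``almost verbatim''. This does not follow. An axiom system sound modulo $\sim_\rwb$ may contain equations that are unsound modulo $\sim_\B$---the $\tau$-laws, e.g.\ $\mu.\tau.x\approx\mu.x$ and $\mu.(x+\tau.y)\approx\mu.(x+\tau.y)+\mu.y$---and the intermediate terms in a derivation of $I_n$ need not be $\tau$-free even though both endpoints are: for instance $(a+a.\tau.a)\mathbin{\|}p\sim_\rwb(a+aa)\mathbin{\|}p$, so such terms can legitimately occur in the chain. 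Consequently the closed terms $p,q$ and the instances $\sigma(t),\sigma(u)$ in your case analysis may perform silent moves; your parenthetical claim that a summand $\sim_\wb$-equivalent to the ($\tau$-free) target is automatically $\sim_\B$-equivalent to it is false (witness $\tau.((a+aa)\mathbin{\|}\sum_{i=1}^n a^i)$); and the structural tools Moller's argument rests on---unique parallel decomposition and depth/norm counting modulo \emph{strong} bisimilarity---are not available modulo $\sim_\wb$. Proving the invariance lemma for the strictly larger class of $\sim_\rwb$-sound axiom systems is therefore a substantial new piece of work that your proposal asserts rather than supplies.

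The paper sidesteps all of this by a reduction rather than a re-proof. It works with the \emph{open} equations $M_n$, both sides of which are \emph{action-free}: they contain no prefixing, hence admit no transitions at all. Two observations then do the work: (i) an action-free equation that is sound modulo $\sim_\rwb$ over CCS is sound modulo $\sim_\B$ over the $\tau$-free fragment CCS$\!\!\phantom{.}_{\Act}$ (Proposition~\ref{prop:act_free_rwb_b}), since on $\tau$-free processes the two relations coincide; and (ii) any proof of an action-free equation from a $\sim_\rwb$-sound system uses only action-free equations (Proposition~\ref{prop:act_free_rwb}), because by the root condition an action-free term cannot be $\sim_\rwb$-equivalent to one with an outgoing transition. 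A finite $\sim_\rwb$-sound system proving every $M_n$ would thus yield a finite $\sim_\B$-sound system proving every $M_n$ over CCS$\!\!\phantom{.}_{\Act}$, contradicting Moller's theorem used as a black box. To rescue your route you would have to carry out the weak analogue of Moller's analysis in the presence of the $\tau$-laws; choosing the action-free open equations instead of their closed instances is precisely what lets the paper quotient the $\tau$-laws out of the problem.
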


To this end, we exploit the family of equations $\{M_n\}_{n \ge 1}$, from \cite{Mo90a}, introduced in Section~\ref{sec:classic}.
First of all, notice that the equations $M_n$ are all sound modulo rooted weak bisimilarity, as bisimilarity is included in rooted weak bisimilarity, i.e. $t \sim_\B u$ implies $t \sim_\rwb u$ for all CCS terms $t,u$.
Then, we remark that in \cite{Mo90a} Moller obtained his result over a fragment of the language CCS that have considered in the paper.
In particular, he considered the \emph{purely interleaving parallel composition} operator, i.e., parallel composition without communication.
Notice that if we restrict the set of actions from $\Acttau$ to $\Act$, then there is no difference between full parallel composition and interleaving, since the lack of actions co-names prevents any form of synchronisation between CCS terms.
This means that the CCS terms considered by Moller were built over the set of actions $\Act$, in place of $\Acttau$.
Let us denote by CCS$\!\!\phantom{.}_{\Act}$ the fragment of CCS considered by Moller in \cite{Mo90a}.

Informally, the core of our proof consists in showing that any equation over CCS that is sound modulo $\sim_\rwb$ and that does not contain any occurrence of the prefixing operator, is \emph{also} sound modulo $\sim_\B$ over CCS$\!\!\phantom{.}_{\Act}$.
Then we show that, since all the terms occurring in the family $\{M_n\}_{n \ge 1}$ do not contain any occurrence of prefixing, any proof of an equation~\ref{eq:moller_family} from an axiom system sound modulo $\sim_\rwb$, uses \emph{only} equations over terms that do not contain any occurrence of prefixing.
Consequently, any finite axiom system that is sound modulo $\sim_\rwb$ over CCS, and can prove all the equations in the family $\{M_n\}_{n \ge 1}$, would also be sound modulo $\sim_\B$ over CCS$\!\!\phantom{.}_{\Act}$. 
As this contradicts the negative result obtained by Moller in \cite{Mo90a}, we can conclude that rooted weak bisimilarity has no finite, complete axiomatisation over CCS.

We devote the remainder of this section to a formalisation of the intuitions given above.
We remark that, although we formally discuss only the case of $\sim_\rwb$, our negative result can be extended to any weak congruence $\sim$ such that: $M_n$ is sound modulo $\sim$ for all $n \ge 1$, $\sim$ coincides with $\sim_\B$ over CCS$\!\!\phantom{.}_{\Act}$, and whenever $p \sim q$ then any initial $\tau$-step by $p$ is matched by $q$ and viceversa.
In particular, our result holds for the \emph{rooted} versions of \emph{branching bisimilarity}, \emph{delay bisimilarity} and $\eta$-\emph{bisimilarity}.

Firstly, we introduce the notion of \emph{action-free} terms, i.e., CCS terms that do not contain any occurrence of prefixing.

\begin{definition}
[Action-free term]
\label{def:action_free}
Let $t$ be a CCS term.
We say that $t$ is \emph{action-free} if $t \ntrans[\mu]$ for all $\mu \in \Acttau$.

An equation $t \approx u$ is \emph{action-free} if $t$ and $u$ are action-free.
\end{definition}

A fundamental property of action-free equations is that their soundness modulo rooted weak bisimilarity over CCS implies soundness modulo bisimilarity over CCS$\!\!\phantom{.}_{\Act}$.

\begin{proposition} 
\label{prop:act_free_rwb_b}
Let $t,u$ be action-free CCS terms.
If $t \approx u$ is sound modulo $\sim_\rwb$ over CCS, then it is also sound modulo $\sim_\B$ over CCS$\!\!\phantom{.}_{\Act}$.
\end{proposition}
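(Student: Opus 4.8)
\textbf{Proof plan for Proposition~\ref{prop:act_free_rwb_b}.}
The plan is to show that, for action-free CCS terms, the behaviours that $\sim_\rwb$ observes over the full language $\Acttau$ collapse precisely to the behaviours that $\sim_\B$ observes over $\mathrm{CCS}_{\Act}$, so that the soundness hypothesis transfers. First I would pin down the shape of an action-free term: since $t \ntrans[\mu]$ for every $\mu\in\Acttau$, an easy induction on the structure of $t$ shows that $t$ is built only from $\nil$, variables, $+$ and $\mathbin{\|}$ — no prefix $\mu.(\cdot)$ may occur, because any outermost prefix would contribute an initial transition, and $+$ and $\mathbin{\|}$ only ``pass through'' the initial transitions of their arguments. (One has to be slightly careful with variables: an action-free term may contain variables, and the claim is about the syntactic absence of prefixing, which the structural induction delivers directly. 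I would actually take ``no occurrence of prefixing'' as the working definition and note it is equivalent to $t\ntrans[\mu]$ for all $\mu$ on closed instances, citing Definition~\ref{def:action_free}.)

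Next I would analyse closed substitution instances. Let $\sigma$ be a closed substitution over $\mathrm{CCS}_{\Act}$; I need $\sigma(t)\sim_\B\sigma(u)$. The key observation is a \emph{transfer of substitutions}: given any closed CCS substitution $\rho$ (over the full $\Acttau$), define a ``priming'' map that does not matter here; what I really want is the converse direction. Since $t\approx u$ is sound modulo $\sim_\rwb$ over all of CCS, in particular $\rho(t)\sim_\rwb\rho(u)$ for every closed $\rho$ over $\mathrm{CCS}_{\Act}$ (these are a subset of all closed CCS substitutions). So it suffices to prove the following bridging lemma: for closed terms $p,q$ over $\mathrm{CCS}_{\Act}$, $p\sim_\rwb q$ implies $p\sim_\B q$. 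This is where the restriction of the action set does the work: over $\mathrm{CCS}_{\Act}$ there is no $\tau$ at all — $\tau$ can only arise from synchronisation of complementary actions, and $\mathrm{CCS}_{\Act}$ has no co-names — so every closed term over $\mathrm{CCS}_{\Act}$ is $\tau$-free. On $\tau$-free LTSs, $\xRightarrow{\,\mu\,}$ coincides with $\trans[\mu]$ and $\xRightarrow{\,\hat\mu\,}$ coincides with $\trans[\mu]$, hence $\sim_\wb$ coincides with $\sim_\B$ and $\sim_\rwb$ coincides with $\sim_\B$; I would prove this by checking that the defining clauses of Definition~\ref{def:rwb} reduce syntactically to those of Definition~\ref{Def:bisimulation} when no $\tau$-transitions exist. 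Composing, $\sigma(t)\sim_\rwb\sigma(u)$ and $\tau$-freeness give $\sigma(t)\sim_\B\sigma(u)$, which is exactly soundness of $t\approx u$ modulo $\sim_\B$ over $\mathrm{CCS}_{\Act}$.

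\textbf{Where the real care is needed.} The statement itself is short, but the subtle point is the quantifier over substitutions: soundness over $\mathrm{CCS}_{\Act}$ asks for $\sigma(t)\sim_\B\sigma(u)$ only for substitutions $\sigma$ \emph{ranging over closed $\mathrm{CCS}_{\Act}$ terms}, and these are exactly the closed instances on which all transitions are $\tau$-free — so the hypothesis ``sound modulo $\sim_\rwb$ over CCS'' is being used only on a restricted, well-behaved family of instances. I should make explicit that I do \emph{not} need the action-freeness of $t,u$ to kill $\tau$'s coming from \emph{inside} $t$ (there are none, by action-freeness) but rather to ensure that even after substitution the only transitions are inherited from $\sigma$, whose images are $\tau$-free; in fact, for this particular proposition action-freeness of $t,u$ is what guarantees $\sigma(t),\sigma(u)$ are themselves $\tau$-free and have their behaviour entirely determined by $\sigma$, so the collapse $\sim_\rwb\,=\,\sim_\B$ applies to them. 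The main obstacle, then, is purely bookkeeping: carefully justifying the equivalence $\sim_\rwb\,=\,\sim_\B$ on $\tau$-free processes and confirming that every closed $\mathrm{CCS}_{\Act}$-instance of an action-free term is $\tau$-free. Both are routine structural/inductive checks, with no combinatorial difficulty, so I would present them as a pair of small lemmas and then assemble the proposition in two lines.
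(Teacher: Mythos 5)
Your proposal is correct and follows essentially the same route as the paper's proof: instantiate the $\sim_\rwb$-soundness hypothesis at closed substitutions ranging over $\tau$-free $\mathrm{CCS}_{\Act}$ processes, and then invoke the coincidence of $\sim_\rwb$ with $\sim_\B$ on $\tau$-free processes to conclude $\sigma(t)\sim_\B\sigma(u)$. The extra care you take in spelling out why the closed instances of action-free terms remain $\tau$-free is a reasonable elaboration of a step the paper leaves implicit, but it does not change the argument.
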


\begin{proof}
Assume that $t \approx u$ is action-free and sound modulo $\sim_\rwb$. 
Let $\sigma$ be any closed substitution mapping variables to a CCS$\!\!\phantom{.}_{\Act}$ processes.
We remark that processes in CCS$\!\!\phantom{.}_{\Act}$ do not contain any occurrence of action $\tau$. 
By the soundness of $t \approx u$, we have that $\sigma(t) \sim_\rwb \sigma(u)$. Since $\sim_\B$ coincides with $\sim_\rwb$ over $\tau$-free processes, we obtain that $\sigma(t) \sim_\B \sigma(u)$. 
Hence, by the arbitrariness of $\sigma$, we can conclude that $\sigma_{\Act}(t) \sim_\B \sigma_{\Act}(u)$ for all closed substitutions $\sigma_{\Act}$ over CCS$\!\!\phantom{.}_{\Act}$, thus giving that $t \sim_\B u$ over CCS$\!\!\phantom{.}_{\Act}$.
\end{proof}

We identify a particular substitution, denoted by $\sigma_\nil$, that maps each variable to the null process.
Formally, the substitution $\sigma_\nil$ is defined as $\sigma_\nil(x) = \nil$, for all $ x \in \Var$.

\begin{lemma}
\label{lem:emptysubstitutionAct} 
Let $t$ be a CCS term. 
\begin{enumerate}
\item \label{lem:act_free}
If $t$ is action-free, then $\sigma_\nil(t) \sim_\B \nil$.
\item \label{lem:not_act_free}
If $t$ is not action-free, then there exists an action $\mu \in \Acttau$ such that $\sigma(t) \trans[\mu]$, for any substitution $\sigma$.
\end{enumerate}
\end{lemma}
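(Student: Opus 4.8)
The plan is to handle both items by structural induction on $t$, together with the routine fact that the SOS rules of Table~\ref{tab:sos_rules} — read, as in Definition~\ref{def:action_free}, over arbitrary (possibly open) CCS terms, so that a bare variable has no outgoing transition — behave well under substitution. First I would record a syntactic characterisation of action-freeness: a CCS term $t$ satisfies $t \ntrans[\mu]$ for all $\mu \in \Acttau$ if and only if $t$ is generated from $\nil$ and variables using $+$ and $\mathbin{\|}$ only, i.e.\ $t$ contains no prefix subterm. This is immediate by induction on $t$: $\nil$ and every variable have no transitions; $\mu.u$ always has the transition $\mu.u \trans[\mu] u$; and for $t_1 + t_2$ and $t_1 \mathbin{\|} t_2$ every rule with those head operators requires a move of an argument, so the composite term moves exactly when one of its arguments does.

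For the first item, I would then apply $\sigma_\nil$ to an action-free $t$. By the characterisation, $\sigma_\nil(t)$ is built from occurrences of $\nil$ using $+$ and $\mathbin{\|}$, and a one-line structural induction shows that such a process has no outgoing transition (no interleaving step is enabled because no argument moves, and hence no synchronisation step either). Since any two processes without outgoing transitions are bisimilar, $\sigma_\nil(t) \sim_\B \nil$. Equivalently, one may argue directly that $\sigma_\nil$ substitutes the inert process $\nil$ for each variable and therefore cannot create new transitions, so action-freeness of $t$ is inherited by $\sigma_\nil(t)$.

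For the second item, suppose $t$ is not action-free and fix $\mu \in \Acttau$ and $t'$ with $t \trans[\mu] t'$ derivable from the rules; I claim this single $\mu$ works for every substitution $\sigma$. The sub-lemma I would prove, by induction on the derivation of $t \trans[\mu] t'$, is that transitions are preserved under arbitrary substitutions: $t \trans[\mu] t'$ implies $\sigma(t) \trans[\mu] \sigma(t')$. The base case is the prefix rule, where $\sigma(\mu.u) = \mu.\sigma(u) \trans[\mu] \sigma(u)$; the cases for $+$ and $\mathbin{\|}$ follow because $\sigma$ distributes over those operators, so the same rule reapplies to the premise transition(s) supplied by the induction hypothesis. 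Instantiating the sub-lemma with the fixed $\mu$ and $t'$ yields $\sigma(t) \trans[\mu] \sigma(t')$, hence $\sigma(t) \trans[\mu]$, for every $\sigma$, as required.

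I do not anticipate a real obstacle here: the whole argument is elementary. The two points that merit a little care are (i) being explicit that $\ntrans[\mu]$ in Definition~\ref{def:action_free} is evaluated on the transition relation extended to open terms, where variables are inert, so that the syntactic characterisation is precisely ``no prefixing''; and (ii) the asymmetry between the two items — the first item crucially needs that we plug in the inert process $\nil$ and would fail for a general substitution, whereas the substitution-preservation sub-lemma underlying the second item holds for every substitution.
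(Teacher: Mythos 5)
Your proof is correct and takes essentially the same route as the paper, which simply states that both items follow by structural induction on $t$; your write-up fills in exactly the details that induction requires (the syntactic characterisation of action-freeness as absence of prefix subterms, and the preservation of transitions under substitution). No gaps.
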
 

\begin{proof}
In both cases, the proof follows by induction over the structure of the term $t$.
\end{proof}

We now proceed to show that proofs of action-free equations, from an axiom system that is sound modulo rooted weak bisimilarity, use only action-free equations. 

\begin{proposition}
\label{prop:act_free_rwb}
Let $\E$ be an axiom system sound modulo $\sim_\rwb$.
\begin{enumerate}
\item \label{prop:both_act_free}
If $t \approx u$ is sound modulo $\sim_\rwb$ and $t$ is action-free, then also $u$ is action-free.
\item \label{prop:only_act_free}
If $\E \vdash t \approx u$ and $t$ is action-free, then a proof of $t \approx u$ from $\E$ uses only action-free equations. 
\end{enumerate}
\end{proposition}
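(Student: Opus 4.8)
The plan is to prove the two items of Proposition~\ref{prop:act_free_rwb} in order, obtaining the second from the first together with the normal-form theorem for equational proofs.

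For item~\ref{prop:both_act_free} I would argue by contradiction. Suppose $t$ is action-free but $u$ is not. By the second part of Lemma~\ref{lem:emptysubstitutionAct} there is some $\mu\in\Acttau$ with $\sigma_\nil(u)\trans[\mu]$, while by the first part $\sigma_\nil(t)\sim_\B\nil$, so $\sigma_\nil(t)$ has no outgoing transitions whatsoever. Since $t\approx u$ is sound modulo $\sim_\rwb$ we get $\sigma_\nil(t)\sim_\rwb\sigma_\nil(u)$, and feeding the move $\sigma_\nil(u)\trans[\mu]$ into the root condition yields a $q'$ with $\sigma_\nil(t)\xRightarrow{\,\hat\mu\,}q'$; by the very definition of $\xRightarrow{\,\hat\mu\,}$ this forces $\sigma_\nil(t)$ to perform at least one transition, a contradiction. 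It is precisely the root condition that does the work here: plain weak bisimilarity would not suffice, since $\nil\sim_\wb\tau.\nil$ although $\tau.\nil$ is not action-free. This is also the exact point at which the announced generalisation to rooted branching, delay and $\eta$-bisimilarity uses the hypothesis that initial $\tau$-moves are matched.

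For item~\ref{prop:only_act_free} I would first record that, $\E$ being sound modulo $\sim_\rwb$, every equation derivable from $\E$ is sound modulo $\sim_\rwb$. Using the ``substitutions happen first'' convention recalled in Section~\ref{sec:back}, a proof of $t\approx u$ from $\E$ can be taken to be a chain $t=t_0,t_1,\dots,t_k=u$ in which each step replaces, inside a context, a substitution instance $\rho(l)$ of (one side of) an axiom $(l\approx r)\in\E$ by $\rho(r)$. Each $t_i\approx t_{i+1}$ is provable from $\E$ in a single step, hence sound modulo $\sim_\rwb$, so a straightforward induction on $i$ using item~\ref{prop:both_act_free} (base case $t_0=t$ action-free) shows that every $t_i$ is action-free, and therefore every equation occurring in the chain is action-free. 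It remains to observe that an action-free term is prefix-free, hence built solely from $\nil$, variables, $+$ and $\mathbin{\|}$, so every subterm of an action-free term is again action-free; in particular the instances $\rho(l)\approx\rho(r)$ appearing in the chain are action-free, and since a transition of $l$ is inherited by $\rho(l)$, the axioms $l\approx r$ used are themselves action-free. Hence the proof uses only action-free equations, and only action-free axioms of $\E$.

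I do not expect a genuine obstacle: all the substance sits in item~\ref{prop:both_act_free}, where the only subtlety is the careful use of the root condition, together with the elementary remark (worth isolating) that ``action-free'' coincides with ``prefix-free'', which is what keeps subterms and $\sigma_\nil$-instances action-free. The passage from item~\ref{prop:both_act_free} to item~\ref{prop:only_act_free} is then routine, relying only on the standard identification of equational provability with the symmetric and transitive closure of one-step rewriting by axiom instances inside contexts.
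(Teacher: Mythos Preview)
Your proposal is correct and follows essentially the same approach as the paper. For item~\ref{prop:both_act_free} both arguments invoke the root condition on $\sigma_\nil(t)\sim_\rwb\sigma_\nil(u)$ together with Lemma~\ref{lem:emptysubstitutionAct}; for item~\ref{prop:only_act_free} the paper performs a direct induction on the equational proof tree (expanding only the substitution case and calling the others ``standard''), whereas you recast the proof as a chain of one-step context rewrites, but the underlying mechanism---propagating action-freeness along the proof via item~\ref{prop:both_act_free} and then down to the axioms using that action-free coincides with prefix-free so that subterms and hence axiom instances stay action-free---is the same.
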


\begin{proof}
We start from the first item.
As $t \approx u$ is sound modulo $\sim_\rwb$, we get that $\sigma_\nil(t) \sim_\rwb \sigma_\nil(u)$. 
Moreover, as $t$ is action-free, by Lemma~\ref{lem:emptysubstitutionAct}.\ref{lem:act_free} we have that $\sigma_\nil(t)$ cannot perform any action. 
Hence, $\sigma_\nil(u)$ cannot perform any action either since, by the root condition, any possible initial $\tau$-transition from $\sigma_\nil(u)$ would have to be matched by a $\tau$-transition from $\sigma_\nil(t)$.
By Lemma~\ref{lem:emptysubstitutionAct}.\ref{lem:not_act_free}, we can then conclude that $u$ is action free.

Let us now deal with the second item.
First of all, we notice that since $t \approx u$ is provable from $\E$, then it is sound modulo $\sim_\rwb$.
Hence, as $t$ is action-free, we can apply Proposition~\ref{prop:act_free_rwb}.\ref{prop:both_act_free} and obtain that $u$ is action-free as well.
The proof then proceeds by induction on the length of the proof of $t \approx u$ from $\E$, where the inductive step is carried out by a case analysis on the last rule of equational logic that is used in the proof.
We expand only the case in which the last rule applied is an instance of the substitution rule.
The other cases are standard.

Assume that $t = \sigma(t')$ and $u = \sigma(u')$ for some substitution $\sigma$ and CCS terms $t',u'$ such that $t' \approx u' \in \E$.
Since $t$ and $u$ are both action-free, from Lemma~\ref{lem:emptysubstitutionAct}.\ref{lem:not_act_free} we can infer that $t'$ and $u'$ are action-free as well. 
In fact, if $t'$ was not action-free, we could directly infer that $\sigma(t') \trans[\mu]$ for some $\mu \in \Acttau$, thus giving a contradiction with $t = \sigma(t')$ being action-free.  
Similarly for $u'$.
Hence $t' \approx u' \in \E$ is action-free.
Notice now that since $t$ and $u$ are both action-free, we can infer that the substitution rule used in the last step of the proof of $t \approx u$ is action-free.
Therefore, we can conclude that the proof of $t \approx u$ from $\E$ uses only action-free equations.
\end{proof}

Theorem~\ref{thm:rwb_not_fin} can then be obtained as a direct consequence of the following result:

\begin{theorem}
\label{thm:rwb_Mn}
Assume that $\E$ is a finite axiom system over CCS that is sound modulo $\sim_\rwb$. Then there exists some $n \ge 1$ such that $\E \not\vdash M_n$, where $M_n$ is the $n$-th member of the family $\{M_n\}_{n \ge 1}$ introduced in Section~\ref{sec:classic}.
\end{theorem}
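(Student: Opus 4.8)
The plan is to argue by contradiction, reducing the claim to Moller's negative result over $\mathrm{CCS}_\Act$ modulo strong bisimilarity by way of the action-free machinery developed above. So assume, towards a contradiction, that $\E$ is a finite axiom system over CCS that is sound modulo $\sim_\rwb$ and that $\E \vdash M_n$ for every $n \ge 1$.

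The first observation is that each $M_n$ is an action-free equation: both sides of $M_n$ are built exclusively from variables, $+$ and $\mathbin{\|}$, so they contain no occurrence of the prefixing operator and are therefore action-free in the sense of Definition~\ref{def:action_free}. Hence, for every $n$, the left-hand side of $M_n$ is action-free, and we may apply Proposition~\ref{prop:act_free_rwb}.\ref{prop:only_act_free}: any proof of $M_n$ from $\E$ uses only action-free equations. Inspecting which axioms such a proof can apply, I would note that only the action-free axioms of $\E$ can be used: if an axiom $t' \approx u' \in \E$ were applied through a substitution $\sigma$ to produce an action-free instance $\sigma(t') \approx \sigma(u')$, then $t'$ and $u'$ would themselves have to be action-free, since otherwise Lemma~\ref{lem:emptysubstitutionAct}.\ref{lem:not_act_free} would yield an outgoing transition of $\sigma(t')$ (resp.\ $\sigma(u')$), contradicting action-freeness. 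Letting $\E'$ be the (finite) subset of $\E$ consisting of its action-free axioms, we have thus shown that $\E' \vdash M_n$ for all $n \ge 1$.

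Next I would transport this derivation to $\mathrm{CCS}_\Act$. Every axiom in $\E'$ is an action-free CCS equation that, being an axiom of the sound system $\E$, is sound modulo $\sim_\rwb$; by Proposition~\ref{prop:act_free_rwb_b} it is therefore sound modulo $\sim_\B$ over $\mathrm{CCS}_\Act$, so $\E'$ is a finite axiom system that is sound modulo $\sim_\B$ over $\mathrm{CCS}_\Act$. Moreover, action-free terms mention no action at all, so a derivation of $M_n$ from $\E'$ consisting solely of action-free equations is, verbatim, an equational-logic derivation over $\mathrm{CCS}_\Act$ (where $\mathbin{\|}$ is pure interleaving, synchronisation being impossible over $\Act$); hence $\E' \vdash M_n$ over $\mathrm{CCS}_\Act$ for every $n$. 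Instantiating $M_n$ via the substitution $x \mapsto a$, $y \mapsto aa$, $z_i \mapsto a^i$ produces Moller's closed instance $I_n$, so $\E' \vdash I_n$ for all $n \ge 1$. This contradicts the negative result of Moller \cite{Mo90a} recalled in Section~\ref{sec:classic}: since $\E'$ is finite and sound modulo $\sim_\B$ over $\mathrm{CCS}_\Act$, for every $n$ larger than the size of each term in the equations of $\E'$ the instance $I_n$ is not derivable from $\E'$. Choosing any such $n$ gives the contradiction, whence $\E \not\vdash M_n$ for some $n \ge 1$.

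I expect the main obstacle to be the bookkeeping in the second paragraph: one must make fully precise that ``a proof uses only action-free equations'' restricts the \emph{applicable axioms} to the action-free ones — so that the \emph{finite} subsystem $\E'$ is all that is ever needed — and that a CCS-derivation whose every line is action-free is literally a $\mathrm{CCS}_\Act$-derivation. Both points rest on Lemma~\ref{lem:emptysubstitutionAct} together with the fact that action-freeness is a purely syntactic, action-agnostic property of terms; once these are in place, the reduction to Moller's theorem is immediate. A secondary point to double-check is the uniformity of the threshold on $n$: since $\E'$ is a fixed finite set, the size bound it induces is a fixed number, and any $n$ exceeding it works.
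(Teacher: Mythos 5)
Your proposal is correct and follows essentially the same route as the paper's own proof: both argue by contradiction, use Proposition~\ref{prop:act_free_rwb}.\ref{prop:only_act_free} to restrict any derivation of the action-free equations $M_n$ to action-free axioms of $\E$, invoke Proposition~\ref{prop:act_free_rwb_b} to transfer soundness to $\sim_\B$ over CCS$\!\!\phantom{.}_{\Act}$, and conclude by contradiction with Moller's result. Your version is somewhat more explicit than the paper's (isolating the finite action-free subsystem $\E'$, observing that an all-action-free derivation is verbatim a CCS$\!\!\phantom{.}_{\Act}$-derivation, and instantiating $M_n$ to the closed $I_n$), but these are refinements of the same argument rather than a different approach.
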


\begin{proof}
Assume, towards a contradiction, that $\E \vdash M_n$ for all $n \ge 1$.
Since $M_n$ is an action-free equation for each $n \ge 1$, by Proposition~\ref{prop:act_free_rwb}.\ref{prop:only_act_free} we have that all the equations that are used in the proof from $\E$ of $M_n$ are action-free as well.
Moreover, for each $n \ge 1$, $M_n$ can be proved by using finitely many action-free equations, as $\E$ is finite by the proviso of the theorem. 
By Proposition~\ref{prop:act_free_rwb_b} we have that all these equations are also sound modulo $\sim_\B$ over CCS$\!\!\phantom{.}_{\Act}$. 
Therefore, we can conclude that the finite axiom set $\E$ allows us to prove $M_n$, for all $n \ge 1$, over CCS$\!\!\phantom{.}_{\Act}$.
This contradicts the negative result obtained by Moller in \cite{Mo90a}, and we can therefore conclude that there is at least one $n \ge 1$ such that $M_n$ is not provable from $\E$.
\end{proof}


\section{Back to the future: related and future work}
\label{sec:conclusion}

Due to the central role played by the parallel composition operator in concurrency, it is not surprising that the literature abounds with studies on the equational characterisation of (various versions) of this operator, to the point that any possible list of related works we shall make will be far from exhaustive.
Hence, we decided to focus on works that are close to the three new achievements we have presented in this paper and to our plans for future work.
We refer the interested reader to the textbooks \cite{BW90,Fo00,BBR10} and the survey papers \cite{AFIL05a,AI07} for a discussion of general (and missing) references.

In Section~\ref{sec:no_binary} we have presented the results from \cite{ACFIL21}, that constitute a first step towards a definitive justification of the canonical standing of the left and communication merge operators by Bergstra and Klop. 
A natural direction for future work is then the following:
\begin{equation}
\tag{FW1}
\parbox{\dimexpr\linewidth-4em}{
\strut
\emph{Can we obtain a full answer to problem (\ref{eq:problem})}?
\strut
}
\end{equation}
We envisage the following ways in which we might generalise our contribution in \cite{ACFIL21}. 
Firstly, we will try to get rid of Assumptions~\ref{Ass:equation} and~\ref{Ass:targets}.
Next, it is natural to relax Assumption~\ref{Ass:deSimone} by considering the GSOS format \cite{BIM95} in place of the de Simone format. 
However, we believe that this generalisation cannot be obtained in a straightforward manner and that it will require the introduction of new techniques.
It would also be very interesting to explore whether some version of problem (\ref{eq:problem}) can be solved using existing results from equational logic and universal algebra \cite{BMS13,BS81}. 

Remaining in the context of finite axiomatisations of bisimilarity over CCS, an interesting research question, that is still in need of an answer, is the following:
\begin{equation}
\tag{FW2}
\parbox{\dimexpr\linewidth-4em}{
\strut
\emph{Is there a finite, complete axiomatisation of bisimilarity over full recursion-free CCS}?
\strut
}
\end{equation}
A finite, complete equational axiomatisation of bisimilarity over the recursion, restriction and relabelling free fragment of CCS, extended with the left and communication merge operators is given in \cite{AFIL09}.
Later on, in \cite{AILT08}, finite, complete equational axiomatizations for fragments of CCS with restriction and relabelling were studied, but considering only pure interleaving parallel composition, thus requiring only the addition of the left merge operator to the syntax of CCS.
However, from our preliminary studies, it seems that the result of \cite{AILT08} no longer holds when we consider the fragment of CCS with restriction and the full parallel composition operator, i.e., by allowing also the synchronisation of parallel components.

In \cite{FL00} and \cite{AFIL05a} finite bases for bisimilarity with respect to PA and BCCSP with parallel composition, extended with the auxiliary operators left merge and communication merge were presented.
Furthermore, in \cite{CFLN08} an overview was given of which behavioural equivalences in the linear time-branching time spectrum are finitely based with respect to BCCSP. 

In the literature, we can find some studies on the axiomatisability of behavioural congruences that are not included in the spectrum of \cite{vG90}.
For instance, in \cite{AFIL05b} it is shown that split-$2$ bisimilarity \cite{vGV87}, a bisimulation like relation on algebras in which the execution of each action is split into an observable beginning and ending, affords a finite equational axiomatisation over CCS enriched with Hennessy's merge.
The papers \cite{CF08,CFvG15} deal with the equational axiomatisation of impossible futures \cite{Vo92,VM01} over BCCSP.
They show that while the impossible futures preorder affords a finite equational axiomatisation, there is no finite, complete axiomatisation of impossible futures equivalence over BCCS even if the set of action is a singleton.
Despite all our efforts, we have not managed to lift the results of \cite{CF08,CFvG15} over CCS yet.
A natural question that arises from this problem is the following:
\begin{equation}
\tag{FW3}
\parbox{\dimexpr\linewidth-4em}{
\strut
\emph{Are there general techniques for lifting negative results across process algebras}?
\strut
}
\end{equation}
Understanding whether it is possible to lift non-finite axiomatisability results among different algebras, and under which constraints this can be done, is an interesting research avenue and we aim to investigate it in future work. 
A methodology for transferring non-finite-axiomatisability results across languages was presented in \cite{AFIM10}, where a reduction-based approach was proposed. 
However, that method has some limitations and thus further studies are needed.

In a similar fashion, the original contribution presented in this paper, i.e., the non-finitely axiomatisability result for rooted weak bisimilarity over CCS, paves the way to the following research question:
\begin{equation}
\tag{FW4}
\parbox{\dimexpr\linewidth-4em}{
\strut
\emph{Are there general techniques for lifting negative results from strong to weak congruences}?
\strut
}
\end{equation}
An answer to this question will allow us to solve many problems that have been already solved for strong semantics, but that are still open for weak semantics.

The lifting techniques possibly found from the research questions (FW3) and (FW4) could also help us to develop a third lifting technique, in order to solve the following problem:
\begin{equation}
\tag{FW5}
\parbox{\dimexpr\linewidth-4em}{
\strut
\emph{Are there general techniques for lifting (negative) results to probabilistic languages}?
\strut
}
\end{equation}
The ever increasing interest in probabilistic systems has in fact inspired a number of studies on the axiomatisation of probabilistic congruences.
The substantial differences in the considered probabilistic models (see \cite{De05} for a survey) prevent us from discussing these results in detail.
Yet, we remark that we can find studies on strong probabilistic semantics \cite{AEI02,BBS95,vGSST90,GV19,LS92,Mi20,SS00,TG20}, weak probabilistic semantics \cite{ABW06,AG09,vGGV19}, as well as on metric semantics \cite{dAGL14}.
Further studies in this direction are encouraged by recent achievements on probabilistic branching semantics \cite{CT20a,CT20b} and behavioural metrics \cite{CLT19,CLT20}.

\section*{Acknowledgments}
This work has been supported by the project `\emph{Open Problems in the Equational Logic of Processes}' (OPEL) of the Icelandic Research Fund (grant No.~196050-051).

\bibliographystyle{splncs04}
\bibliography{LICS21_bib}

\end{document}